\newcommand{\ds}{\displaystyle}
\newcommand{\comment}[1]{}
\newcommand{\mx}[1]{\mathbf{#1}}
\newcommand{\1}{\mathbbm{1}}
\newcommand{\ones}{\1}
\newcommand{\ii}{\mathbbm{i}}
\newcommand{\av}{\boldsymbol{\alpha}}
\renewcommand{\v}{\boldsymbol{v}}
\renewcommand{\u}{\boldsymbol{u}}
\newcommand{\w}{\boldsymbol{w}}
\newcommand{\bv}{\boldsymbol{\beta}}
\newcommand{\ev}{\boldsymbol{e}}
\newcommand{\gv}{\boldsymbol{\gamma}}
\newcommand{\A}{\mx{A}}
\newcommand{\B}{\mx{B}}
\newcommand{\G}{\mx{G}}
\renewcommand{\H}{\mx{H}}
\newcommand{\I}{\mx{I}}
\newcommand{\J}{\mx{J}}
\newcommand{\M}{\mx{M}}
\renewcommand{\P}{\mx{P}}
\newcommand{\Q}{\mx{Q}}
\newcommand{\R}{\mx{R}}
\newcommand{\W}{\mx{W}}
\newtheorem{theorem}{{\bf Theorem}}
\newtheorem{example}{{\bf Example}}
\newtheorem{definition}{{\bf Definition}}
\newtheorem{lemma}[theorem]{Lemma}
\newcommand{\Figref}[1]{Figure \ref{#1}}
\newlength{\LI}
\title{A constructive proof of the phase-type characterization theorem}
\author{
Ill\'es Horv\'ath$^1$ and Mikl\'os Telek$^{2,3}$  \\[3mm]
$^1$ {\small  MTA-BME Information Systems Research Group, Budapest, Hungary} \\
$^2$ {\small Department of Telecommunications, Technical University of Budapest, Budapest, Hungary}\\
$^3$ {\small Inter-University Center of Telecommunications and Informatics, Debrecen, Hungary}\\
{\small e-mail: horvath.illes.antal@gmail.com,  telek@hit.bme.hu } \\
}
\begin{document}
\sloppy

\maketitle

\begin{abstract}

The paper presents a new proof of O'Cinneide's characterization theorem \cite{OCIN90}. It is much simpler than the original one and
constructive in the sense that we not only show the existence of a phase type representation, but present a procedure which creates
a phase type representation. We prove that the procedure succeeds when the conditions of the characterization theorem hold.

Keywords: Matrix-Exponential distribution, phase-type distribution, vector-matrix representation, randomization.
\end{abstract}

\section{Introduction}

The characterization theorem of O'Cinneide \cite{OCIN90} proves that any finite order matrix exponential function which is strictly positive
in $(0,\infty)$ and satisfies the dominant eigenvalue condition has a finite dimensional phase type (PH) representation.
Based on this theorem Mocanu and Commault \cite{CoMo03} proposed a procedure for computing
the PH representation of such matrix exponential function.
A quite different approach from Maier \cite{Maier91} proposes a similar procedure
based on Soittola's automata-theoretic algorithms \cite{Soittola}.
All of these papers prove the characterization theorem, but
use complex mathematical concepts, such as polytopes, or positive rational sequences.
Additionally, both procedures in \cite{CoMo03} and in \cite{Maier91} are implicit in the sense that an essential parameter
($\tau$ in \cite{CoMo03} and $c$ in \cite{Maier91}) are found as a result of a numerical search.

In this paper we present a constructive proof of the characterization theorem by proposing an explicit procedure for computing a
phase type (PH) representation of a matrix exponential function and showing that the procedure always terminates successfully
if the matrix exponential function satisfies the positivity and the eigenvalue conditions.

Compared to the existing resuls, one of the main advantages of the presented constructive proof is that it
is rather elementary, using basic function and matrix theory
and stochastic interpretation of Markov processes.
Another contribution of the paper is that it links the sparse monocyclic representation \cite {CoMo03} to the characterization theorem \cite{OCIN90}.

\section{Preliminaries}
\label{sec:prelim}

\begin{definition}
\label{def:me}
Let $X$ be a non-negative random variable with
probability density function (pdf)
\[ f_X(x)=\frac{\mathrm{d}}{\mathrm{d}x}\mathrm{Pr}(X<x)=- \av \A e^{\A x} \1,  ~~~ x\geq 0,\] where
$\av$ is an initial row vector of size $n$
with $\int_0^\infty f_X(x) dx=1$ (there is no probability mass at zero), 
$\A$ is a square matrix of size $n\times n$ and $\1$ is the column vector of ones of size $n$.
In this case, we say that $X$ is matrix-exponentially distributed with representation $\av, \A$, or shortly, ME($\av, \A$)-distributed.
\end{definition}

In Definition \ref{def:me} the elements of $\av$ and $\A$ are real numbers without any specific restriction on their sign and the only
restriction is that $f_X(x)$ is non-negative for $x\geq 0$. We note that $\A$ and $e^{\A x}$ commute.

For a given representation $(\av,\A)$, the size $n$ of $\av$ (and $\A$) is called the \emph{order} of the representation.

The representation of a given ME distribution is not unique.

\begin{theorem}
\label{thm:trafo}
\cite{[BUCH10a]}
Let ME($\av, \A$) of order $n$ and ME($\gv, \G$) of order $m$ be two ME distributions with
pdf $f_X(x)$ and $f_Y(x)$, respectively. If $\A$ is $n\times n$

If there exists a matrix $\W$  of cardinality $n\times m$ such that
$$\gv = \av \W,\quad  \A \W =\W \G ,\quad  \1_n=\W \1_m,$$
then
ME($\av, \A$) $\equiv$ ME($\gv, \G$) (that is, $f_X(x)=f_Y(x)$).
\end{theorem}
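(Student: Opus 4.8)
The plan is to compute the density $f_Y(x)=-\gv\G e^{\G x}\1_m$ attached to the representation $(\gv,\G)$ and to rewrite it, using the three intertwining identities, as the density $f_X(x)=-\av\A e^{\A x}\1_n$ of $(\av,\A)$. Equality of the densities is exactly the claimed identity ME($\av,\A$)$\equiv$ME($\gv,\G$).

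The first step is to lift the relation $\A\W=\W\G$ to higher powers: a one-line induction gives $\A^k\W=\W\G^k$ for every integer $k\ge 0$. Plugging this into the defining power series of the matrix exponential, and invoking the absolute convergence of $\sum_{k\ge 0}\frac{x^k}{k!}\A^k$ to justify the term-by-term rearrangement, yields the key intertwining identity $e^{\A x}\W=\W e^{\G x}$ for all $x\ge 0$. Multiplying this on the left by $\A$ and using $\A\W=\W\G$ once more gives $\A e^{\A x}\W=\A\W e^{\G x}=\W\G e^{\G x}$.

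The second step is to chain the three hypotheses. Starting from $f_X(x)=-\av\A e^{\A x}\1_n$ and substituting $\1_n=\W\1_m$, we obtain $f_X(x)=-\av\A e^{\A x}\W\1_m=-\av\W\G e^{\G x}\1_m$ by the identity just derived; finally $\av\W=\gv$ turns this into $-\gv\G e^{\G x}\1_m=f_Y(x)$. Hence $f_X\equiv f_Y$, which is the assertion.

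I do not expect a genuine obstacle here — the argument is a short computation. The only point that deserves a word of care is the passage from $\A^k\W=\W\G^k$ to $e^{\A x}\W=\W e^{\G x}$, i.e. the fact that an intertwining relation between two matrices is inherited by power series (here the exponential) in them; this is standard and rests on the same absolute convergence of the exponential series that already underlies the remark, noted above in the excerpt, that $\A$ and $e^{\A x}$ commute.
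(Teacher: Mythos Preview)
Your proof is correct and follows essentially the same route as the paper: both arguments rest on the intertwining identity $e^{\A x}\W=\W e^{\G x}$ (derived from $\A\W=\W\G$ via the exponential power series) and then chain the three hypotheses to transform one density into the other. The only cosmetic difference is direction---the paper runs the chain from $f_Y$ to $f_X$ while you run it from $f_X$ to $f_Y$---and you spell out the passage to the exponential a bit more explicitly than the paper does.
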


\begin{proof}
\[
\begin{array}{l}
f_Y(x)=- \gv \G e^{\G x} \1_m = - \av \W \G e^{\G x} \1_m =
- \av  \A e^{\A x} \W \1_m =
- \av  \A e^{\A x} \1_n = f_X(x).
   \\
\end{array}
\]
\end{proof}


Theorem \ref{thm:trafo} will be used as a representation transformation tool.
The size of column vector $\1$ is explicitly indicated in the theorem as a subscript.

\begin{definition}
\label{def:minimal}
A representation of an ME distribution has minimal order if the distribution has no representation of a smaller order.
A representation of minimal order is referred to as a \emph{minimal representation}.
\end{definition}

In a minimal representation, there are no ``extra'' or ``redundant'' eigenvalues in matrix $\A$.
More precisely a minimal representation has the following properties \cite{TeHo07a}:
\begin{itemize}
\item[P1)] All Jordan blocks of $\A$ have different eigenvalues.
\item[P2)] All eigenvalues contribute to the distribution with maximal multiplicity. For example, a Jordan block of size $n_i$ corresponding to eigenvalue $-\lambda_i$ results in the terms $\sum_{j=1}^{n_i} c_{\lambda_i,j}x^{j-1}e^{-\lambda_ix}$ in $f_X(x)$, where $c_{\lambda_i,n_i}\neq 0$.
\item[P3)] $\av$ is not orthogonal to any of the right-eigenvectors of $\A$.
\item[P4)] $\1$ is not orthogonal to any of the left-eigenvectors of $\A$.
\item[P5)] The Jordan block structures of all minimal representations of an ME distribution are identical.
\end{itemize}

These properties are explained further in Appendix \ref{sec:appb}. { Based on these properties, a minimal representation can be obtained directly from $f_X$. If $f$ takes the form
$$f(x)=\sum_{i=1}^m\sum_{j=1}^{n_i} c_{\lambda_i,j}x^{j-1}e^{-\lambda_ix}$$
where $\lambda_i$ are different and $c_{\lambda_i,n_i}\neq 0$, then we will consider the following representation $(\av,\A)$:
$$\A=\left(
\begin{array}{ccccc}
\J_1 & 0 & \dots & & 0\\
0 & \J_2 & 0 &\dots & 0\\
\vdots\\
0 & \dots & & 0 & \J_m
\end{array}
\right),
$$
where
$$\J_i=\left(
\begin{array}{ccccc}
\lambda_i & 1 & \dots & & 0\\
0 & \lambda_i & 1 &\dots & 0\\
\vdots\\
0 & \dots & & 0 & \lambda_i
\end{array}
\right).
$$
and $\J_i$ is of size $n_i$. $\av$ can be calculated by solving
$$-\av e^{\A x}\A\ones=f_X(x);$$
this equation can be solved because the left-hand side contains all the terms $x^{j-1}e^{-\lambda_ix}$ up to $j\leq n_i$ for $i=1,\dots,m$.
\begin{lemma}
\label{lemma:odavissza}
The representation $(\av,\A)$ is minimal for $f_X$.
\end{lemma}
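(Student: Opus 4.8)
The plan is to show that \emph{every} ME representation of $f_X$ has order at least $n:=\sum_{i=1}^{m}n_i$. Since the construction produces $\av$ as a solution of $-\av e^{\A x}\A\1=f_X(x)$ (solvable, as noted above), the pair $(\av,\A)$ is a representation of $f_X$ of order exactly $n$, so this lower bound forces it to be minimal in the sense of Definition~\ref{def:minimal}.

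The first step is a structural observation about ME densities in general. Let ME($\gv,\G$) have order $k$ and density $g(x)=-\gv\G e^{\G x}\1$. Passing to the Jordan form of $\G$ and using $\frac{\mathrm{d}}{\mathrm{d}x}e^{\G x}=\G e^{\G x}$ together with the explicit entries of $e^{Jx}$ for a single Jordan block $J$, one sees that $g$ is a finite linear combination of the functions $x^{\ell}e^{\mu x}$ where $\mu$ runs over the eigenvalues of $\G$ and $0\le\ell\le b_\mu-1$, with $b_\mu$ the size of the largest Jordan block of $\G$ at $\mu$; in particular no power $x^{\ell}$ with $\ell\ge b_\mu$ can appear. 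Because the functions $\{x^{\ell}e^{\mu x}\}$ are linearly independent, the coefficients of this expansion are determined by $g$ alone, independently of the representation.

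The second step applies this with $g=f_X$. By hypothesis the coefficient of $x^{n_i-1}e^{-\lambda_i x}$ in $f_X$ is $c_{\lambda_i,n_i}\neq 0$, so any $\G$ with ME($\gv,\G$)$\equiv f_X$ must have $-\lambda_i$ as an eigenvalue and a Jordan block at $-\lambda_i$ of size at least $n_i$. Since the $\lambda_i$ are pairwise distinct, these $m$ eigenvalues are distinct and their Jordan blocks occupy disjoint sets of coordinates, so $k\ge\sum_{i=1}^{m}n_i=n$. Together with the first paragraph this proves the lemma.

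The step I expect to need the most care in writing up is the upper bound $\ell\le b_\mu-1$ on the admissible powers in the structural observation — that a Jordan block of size $b_\mu$ produces no term $x^{b_\mu}e^{\mu x}$. It is elementary (it follows at once from $N^{b}=0$ for the nilpotent part $N$ of a $b\times b$ Jordan block), but it is exactly the feature that, combined with the distinctness of the $\lambda_i$ and the hypothesis $c_{\lambda_i,n_i}\neq 0$, drives the dimension count. An alternative route would be to verify directly that $(\av,\A)$ satisfies properties P1--P4 and then invoke the converse fact, developed alongside these properties in Appendix~\ref{sec:appb}, that P1--P4 are also sufficient for minimality: P1 and P2 are immediate from the distinctness of the $\lambda_i$ and from $c_{\lambda_i,n_i}\neq 0$; P4 holds because the left eigenvector of $\J_i$ is supported on the last coordinate of its block, which is not orthogonal to $\1$; and P3 holds because the component of $\av$ on the first coordinate of block $i$ equals $c_{\lambda_i,n_i}$ up to a nonzero factor.
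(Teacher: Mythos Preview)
Your proposal is correct and follows essentially the same route as the paper. The paper's own proof is only a one-line sketch (``essentially due to properties P1--P5 and the fact that no Jordan block of size smaller than $n_i$ can represent the term $x^{n_i-1}e^{-\lambda_ix}$''), and your main argument simply fleshes out precisely that dimension count, while your alternative route via verifying P1--P4 mirrors the paper's reference to those properties and to Appendix~\ref{sec:appb}.
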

The proof is essentially due to properties P1-P5 and the fact that no Jordan block of size smaller than $n_i$ can represent the term $x^{n_i-1}e^{-\lambda_ix}$. Appendix \ref{sec:appb} elaborates more on this topic.
}

If representation $(\av,\mx{A})$ is minimal then
there are some straightforward necessary conditions for vector $\av$ and matrix $\mx{A}$ to define a valid distribution:
\begin{itemize}
    \item[C1)] The eigenvalues of $\mx{A}$ have negative real part (to avoid divergence of $f_X(x)$ as $x\rightarrow\infty$).
    \item[C2)] There is a real eigenvalue of $\A$ with maximal real part (to avoid oscillations to negative values as $x\rightarrow\infty$).
    \item[C3)] $\av \1 = 1$ (normalizing condition which ensures $\int_0^\infty f_X(x) dx=1$).
    \item[C4)] If for all $i\in\{0, 1, \ldots, j-1\}$ the $i$th derivative of $f_X(x)$ is zero then the $j$th derivative of $f_X(x)$ is non-negative
    (to avoid decreasing behavior around $x=0$).
\end{itemize}
If any of these necessary conditions are violated then the tuple consisting of the vector $\av$ and matrix $\A$
does not define a valid ME distribution. Note that non-minimal representations might contain any additional eigenvalues, including for example\ positive ones.

A subclass of ME distributions is the class of phase-type distributions (PH distributions).
\begin{definition}
\label{def:ph}
If $X$ is an ME($\av, \mx{A}$) distributed random variable, where
$\av$ and $ \A$ have the following properties:
\begin{itemize}
    \item $\alpha_i\geq 0$, $\av \1 = 1$
    \item $A_{ii} < 0$, $A_{ij} \geq 0$ for $i\neq j$, $\A \1 \leq 0$
    \item $ \mx{A}$ is non-singular,
\end{itemize}
then we say that $X$ is phase-type distributed with representation $(\av, \A)$,
or shortly, PH($\av, \A$) distributed.
\end{definition}

PH distributions can be interpreted as the time of absorption in a
CTMC \cite{[NEUT81]} and consequently the conditions of Definition \ref{def:ph} are sufficient
for vector $\av$ and matrix $\A$ to define a valid distribution.
Vector $\av$ or  matrix $\A$
satisfying the conditions of Definition \ref{def:ph} are referred to as {\em Markovian}.

The following properties are essential for the characterization of
ME distributions.

\begin{definition}
\label{def:dec}
An ME($\av$, $\A$) distribution satisfies the {\em dominant
eigenvalue condition} (DEC) if for some minimal representation
ME($\gv$, $\G$), $\G$ has a single eigenvalue with maximal real
part. This eigenvalue is called the dominant eigenvalue. Its
multiplicity may be higher than 1.
\end{definition}

Definition \ref{def:dec} excludes the case when $a$ is the dominant real eigenvalue and there is a pair of
complex eigenvalues with the same real part, for example $a\pm \ii b$, where $\ii$ is the imaginary unit.

Properties P1-P5 ensure that if C1-C4 hold for one minimal
representation ME($\av$, $\A$), they hold for all equivalent minimal representations.
Additionally, if the dominant eigenvalue has multiplicity higher than 1,
then it belongs to a Jordan-block whose size is equal to the multiplicity of the dominant eigenvalue.

\begin{definition}
The ME($\av$, $\A$) distribution with density $f_X$ satisfies the
{\em positive density condition} if $f_X(x)>0$ for all $x\in(0,\infty)$.
\end{definition}

\begin{theorem}\cite{OCIN90}
\label{thm:main}
If $f_X$ is ME($\av$, $\A$) distributed, then $f_X$ has a finite dimensional PH($\bv$, $\B$) representation
iff the following two conditions hold:
\begin{itemize}
\item ME($\av$, $\A$) satisfies the dominant eigenvalue condition;
\item $f_X$ satisfies the positive density condition.
\end{itemize}
\end{theorem}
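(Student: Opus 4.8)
The plan is to split into the easy and hard directions. For the "only if" direction, suppose $f_X$ has a finite PH$(\bv,\B)$ representation. Then $f_X(x)=-\bv\B e^{\B x}\1>0$ for all $x>0$ follows from the stochastic interpretation: $f_X(x)$ is the density of an absorption time in a CTMC whose state space is reachable from the support of $\bv$, so it is strictly positive on $(0,\infty)$ (after discarding transient states that cannot be reached, which only lowers the order). For the dominant eigenvalue condition, I would invoke the Perron–Frobenius theorem applied to the nonnegative matrix $\B+c\I$ for large $c$: the eigenvalue of $\B$ with maximal real part is real and, restricted to the irreducible "communicating" block that actually governs the tail of $f_X$, it is simple in the Perron–Frobenius sense; passing to a minimal representation via Theorem~\ref{thm:trafo} and properties P1–P5 preserves the Jordan block of this dominant eigenvalue, so DEC holds. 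The small subtlety here is handling reducibility of $\B$ and the possibility of several communicating classes, which is resolved by noting the tail exponent of $f_X$ is the Perron eigenvalue of the relevant class.

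The substance of the theorem is the "if" direction, and here the plan is genuinely constructive, matching the paper's stated aim. Starting from a minimal representation ME$(\av,\A)$ of $f_X$ satisfying DEC and positive density, I would first read off, as in Lemma~\ref{lemma:odavissza}, the Jordan structure: a dominant real eigenvalue $-\lambda_1$ with its Jordan block, plus the remaining eigenvalues with strictly smaller real part. The idea is to build a phase-type generator in a sparse, (almost) monocyclic / Feedback-Erlang form à la Mocanu–Commault: a chain of Erlang-type blocks, one per (conjugate pair of) eigenvalue, with feedback probabilities chosen to produce the required complex eigenvalues, all dominated by a sufficiently long Erlang block realising $-\lambda_1$. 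The key parameters are (i) the rate and length of the dominant Erlang block, which must be taken large enough that the dominant eigenvalue strictly dominates everything else and the whole construction stays Markovian, and (ii) the mixing/initial vector $\bv$, obtained from $\av$ by the transformation matrix $\W$ of Theorem~\ref{thm:trafo}, so that $\bv=\av\W$, $\A\W=\W\B$, $\1=\W\1$.

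The crux — and the main obstacle — is proving that $\bv$ comes out \emph{Markovian} (nonnegative, summing to one) once the dominant block is taken large enough. This is where positive density is used: because $f_X>0$ on $(0,\infty)$ and its behaviour near $0$ is controlled by conditions C1–C4, convolving the target density with a long, fast Erlang (equivalently, looking at the distribution "shifted" by passage through the dominant chain) yields a density that is eventually a \emph{positive} combination of the basis functions $x^{j-1}e^{-\mu_i x}$, hence has a nonnegative initial vector in the sparse representation. I would make this quantitative: show that as the length $N$ of the dominant Erlang block $\to\infty$ with rate scaled proportionally, the coefficients of $\bv$ converge to strictly positive limits (the limit being essentially $f_X$ evaluated against the positive kernel), so for $N$ large enough all coordinates of $\bv$ are positive; normalisation $\bv\1=1$ is automatic from $\1=\W\1$ and $\av\1=1$. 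Finally I would verify the generator is non-singular (no zero eigenvalue, since all $-\lambda_i$ have negative real part) and $\B\1\le 0$ by the chain/feedback construction, completing the PH representation and hence the proof. The termination claim for the procedure is exactly the statement that this "$N$ large enough" can be found effectively, which the positivity margin of $f_X$ on compact subintervals of $(0,\infty)$ guarantees.
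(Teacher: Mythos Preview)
Your overall architecture matches the paper's: Perron--Frobenius for the necessary direction, and for the sufficient direction a monocyclic (Feedback--Erlang) generator $\G$ followed by an Erlang extension to force the initial vector nonnegative. However, two concrete steps are missing, and one is described in a way that would not work as stated.

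First, you do not handle the case $f_X(0)=0$. In the paper's construction the Erlang tail is \emph{appended after} $\G$, and the resulting initial vector has scalar entries $\gv(\I+\G/\lambda)^k(-\G\1)/\lambda \approx f_X(k/\lambda)/\lambda$ for $k=0,\dots,n-1$. Positivity of these entries is obtained from $\epsilon=\inf_{x\in[0,\tau]}f_X(x)>0$, which fails when $f_X(0)=0$. The paper fixes this by a preliminary ``deconvolution'' step (Lemma~\ref{lemma:fnulla}): write $f_X=f_Y*\mathrm{Erlang}(l,\mu)$ with $f_Y(0)>0$, build a PH representation for $f_Y$, and then prepend the Erlang at the end. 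Your phrase ``convolving the target density with a long, fast Erlang'' points in the wrong direction: convolution would change the distribution, and prepending Erlang phases leaves the possibly negative $\gv$ as an internal transition vector, which is not allowed.

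Second, the appended tail produces, in addition to the scalar entries above, a \emph{vector} block $\gv(\I+\G/\lambda)^n\approx\gv e^{\G\tau}$, and you need \emph{every coordinate} of this vector to be positive for some $\tau$. This is \emph{not} a consequence of $f_X>0$ (which only controls the scalar $\gv e^{\G\tau}(-\G)\1$), and indeed the paper notes it can fail for a general Markovian $\G$. The paper proves it via Lemma~\ref{lemma:csillag}, which exploits the specific monocyclic structure: the first row of $e^{\G x}$ asymptotically dominates all others, and $\gv_1>0$ (forced by minimality and the DEC), so $\gv e^{\G x}>0$ for large $x$. Your sketch (``coefficients converge to strictly positive limits, the limit being essentially $f_X$ evaluated against the positive kernel'') does not supply this argument and conflates the scalar tail entries with the vector head block.
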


The original proof of O'Cinneide in \cite{OCIN90} is rather complex, using
Laplace-Stieltjes transform and geometric properties of the space of
PH-distributions. In this paper we present an
algorithm that gives a constructive and altogether more elementary
proof, using function and matrix theory.

\section{Procedure and proof}

Our main goal is an algorithm that provides a constructive proof for
the sufficient direction of Theorem \ref{thm:main}, that is,  given
that the dominant eigenvalue condition and the positive density
condition hold for ME($\av$, $\A$), find a PH-representation equivalent to ME($\av$, $\A$);
in other words, find a vector-matrix pair ($\bv$, $\B$) where $\bv$ and $\B$ are
Markovian and define the same distribution as ME($\av$, $\A$).

This section is devoted to the algorithmic construction,
also stating the theorems used along the way. Proofs are given in Appendix \ref{sec:appc}.

{
We also included a proof for the necessary direction of Theorem \ref{thm:main} in Appendix \ref{sec:appb}.
While the proof of the necessary direction is straightforward using the techniques in \cite{OCIN90},
we opted to include a self-contained, elementary proof that is more in line with the methods of the present paper.}

\subsection{Sketch of the algorithm}

The algorithm consists of five main steps. Steps 1 and 2 are
preparatory, and Step 5 is just correction related to Step 2.
\begin{itemize}
\item Step 1. We find an equivalent minimal representation $(\av_1, \A_1)$ for ($\av$, $\A$) if it is not minimal by eliminating any ``extra'' eigenvalues of $\A$, which does not contribute to the pdf.
{ We refer to Lemma \ref{lemma:odavissza} and \cite{[BUCH10a]} for a straightforward and computationally stable method of finding a minimal representation.}

\item
Step 2. This step applies only if density is zero at $0$, that is, $f_X(0)=0$. This step is essentially what may be called ``deconvolution'': we represent $f_X$ as the convolution of some $f_Y$ matrix exponential density function with $f_Y(0)>0$ and an appropriate Erlang-distribution Erlang($k,\mu$) (see Lemma \ref{lemma:fnulla}); if $f_Y$ has a Markovian representation, then it gives a straightforward Markovian representation for $f_X$ as well (see Lemma \ref{lemma:composition}). Thus we only need to find a Markovian representation for $f_Y$ (and the corresponding representation, which is obtained from Lemma \ref{lemma:fnulla}), where $f_Y(0)>0$. If this step is applied, Steps 3 and 4 are applied for $f_Y$ instead of $f_X$, and we switch back to $f_X$ in Step 5.

\item
Step 3. An equivalent representation ($\gv$, $\G$) is given with Markovian matrix $\G$, while $\gv$ may still have negative elements. The main tool of this step is the so-called monocyclic structure { (with Feedback-Erlang blocks)}. Typically, the size of $\G$ is
larger than that of $\A_2$ (because each pair of complex conjugate eigenvalues is represented with at least 3 phases);
that said, $\G$ is a sparse matrix with a simple block bi-diagonal structure. For this step only the dominant eigenvalue condition is necessary.

\item Step 4. $\gv$ and $\G$ are transformed further into
$\bv$  and $\B$ where $\bv$ is Markovian (and the Markovity of
$\mx{B}$ is also preserved) essentially by adding an ``Erlang-tail''
(a number of sequentially connected exponential phases with identical rates) of
proper order and rate to the monocyclic structure described by the Markovian  matrix $\G$.
The main mathematical tool of this step is the approximation of elementary functions. Essentially, this last step is the contribution of the paper.
The skeleton of this step is composed of the following elements:
\begin{itemize}
\item Find $\tau$ such that $\gv e^{\G\tau}>0$ (element-wise).
Such $\tau$ always exists if the dominant eigenvalue and the positive density conditions hold and the pair $(\gv,\G)$ results from the previous step.
{ We remark that for a general representation, even if $\G$ is Markovian, such a $\tau$ may not exist. This is further explained after Lemma \ref{lemma:csillag}.}

\item Find $\lambda'$ such that
$$\gv \left(\mx{I}+\frac{\mx{G}}{\lambda}\right)^{\tau\lambda} > 0\quad \forall\lambda\geq \lambda'$$ which is always possible since
$\left\|\gv (\mx{I}+\frac{\mx{G}}{\lambda})^{\tau\lambda} - \gv e^{\mx{G}\tau} \right\| \to 0$ as $\lambda\to \infty$.

\item Let $\epsilon=\inf_{x\in(0,\tau)} f_X(x)$.
$\epsilon>0$ because of the positive density condition and the result of Step 2.
Find $\lambda''$ such that
\[\left|- \gv e^{\mx{G}\tau} \G \1 + \gv \left(\mx{I}+\frac{\G}{\lambda}\right)^{\tau\lambda} \G\1  \right|
< \epsilon \quad \forall\, \lambda\geq \lambda''.
\]
This ensures  that
$- \gv \left(\mx{I}+\frac{\G}{\lambda}\right)^k \G\1 >0 $ for $k=1,\dots,n$ where $n=\tau\lambda''$.
This is always possible when $\epsilon>0$.
\item Extend the $(\gv,\mx{G})$ representation with an Erlang tail of rate
$\lambda=\max(\lambda',\lambda'')$ and order $n=\lambda \tau$.
\end{itemize}

\item Step 5. If Step 2 was applied, at this point we have a Markovian representation for $f_Y$. To switch back to $f_X$, we use Lemma \ref{lemma:composition}. If Step 2 was not applied, Step 5 does not apply either.

\end{itemize}

\subsection{Step 1: Minimal representation}

{ Starting from representation $(\av,\A)$, we can obtain a minimal representation $(\av_1,\A_1)$
with the application of a representation minimization method.
A minimal representation can be obtained through several approaches. One possibility is directly from the pdf $f(x)=\av e^{\A x}\1$ as in Lemma \ref{lemma:odavissza}. Another, computationally stable order reduction method is the Staircase method from \cite{[BUCH10a]},
which uses singular value decomposition. In any case, the minimal representation $(\av_1,\A_1)$ enjoys properties P1-P5.}

There are two important properties that can be determined from a
minimal representation (or the density function directly).
These are the value and the  multiplicity of the dominant
eigenvalue and the validity of the dominant eigenvalue
condition. We denote the dominant eigenvalue (which is real
and negative) by $- \lambda_1$ and its multiplicity by $n_1$. Indeed,
$\lambda_1$ and $n_1$ determine the asymptotic rate of decay of the
pdf: it decays like $c_{\lambda_1, n_1}  x^{n_1-1} e^{-\lambda_1
x}$, where $c_{\lambda_1, n_1}$ is a positive constant, more precisely
\[ \lim_{x\rightarrow\infty} \frac{f(x)}{x^{n_1-1} e^{-\lambda_1 x}} = c_{\lambda_1, n_1}.
\]

\subsection{Step 2: Positive density at zero}
\label{sec:step3}

In the case when ME$(\av_1,\A_1)$ is such that $f_X(x)>0$ for positive $x$ values, but $f_X(0)=0$, then based on the following lemma,
we represent ME$(\av_1,\A_1)$ as the { convolution of an Erlang distribution and a matrix exponential distribution ME$(\av_2,\A_2)$
whose density is positive at $0$. Actually, it turns out from the proof of the following lemma that $\A_1=\A_2$.}

\begin{lemma}
\label{lemma:fnulla}
If $f_X(x)=-\av_1 e^{\A_1 x}\A_1\ones$ is a matrix exponential pdf with
\begin{equation}\label{eq:fx}
f_X(x)>0\quad\forall x>0,\quad \quad
\left. f_X^{(i)}(x)\right|_{x=0}=0 \quad i=0,\dots,l-1, \quad \quad \left. f_X^{(l)}(x)\right|_{x=0}>0,
\end{equation}
then $f_X$ can be written in the form
$$f_X=f_Y\ast g(l,\mu,\cdot),$$
for some large enough $\mu$, where
$g(l,\mu,x)=\frac{\mu^l x^{l-1}e^{-\mu x}}{(l-1)!}$ is the $\textrm{Erlang}(l,\mu)$ pdf,
$\ast$ denotes convolution and $f_Y(x)$ is a matrix exponential function with
$$f_Y(x)>0\quad\forall x\geq 0.$$
\end{lemma}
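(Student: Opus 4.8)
The plan is to prove the lemma by an explicit time-domain deconvolution, followed by a three-region positivity estimate. Recall that the $\textrm{Erlang}(l,\mu)$ density has Laplace transform $\big(\mu/(s+\mu)\big)^l$, so deconvolving $f_X$ by it means, in the Laplace domain, multiplying by $\big((s+\mu)/\mu\big)^l$, which in the time domain is the operator $\mu^{-l}(D+\mu)^l$ with $D=\mathrm d/\mathrm dx$. Accordingly I would \emph{define} $f_Y:=\mu^{-l}(D+\mu)^l f_X$. Because $f_X$ and its first $l-1$ derivatives vanish at $0$, all boundary terms drop out and one gets $\mathcal L\big[(D+\mu)^l f_X\big](s)=(s+\mu)^l\mathcal L[f_X](s)$, hence $\mathcal L[f_X]=\mathcal L\big[g(l,\mu,\cdot)\big]\,\mathcal L[f_Y]$, i.e.\ $f_X=f_Y\ast g(l,\mu,\cdot)$ (valid once $f_Y\ge 0$ is known). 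Since $D$ commutes with $e^{\A_1 x}$, one also reads off $f_Y(x)=-\av_1\big(\mx{I}+\A_1/\mu\big)^{l}e^{\A_1 x}\A_1\ones=-\gv e^{\A_1 x}\A_1\ones$ with $\gv=\av_1(\mx{I}+\A_1/\mu)^{l}$, so $f_Y$ is a matrix exponential function with $\A_2=\A_1$ (taking $\mu$ larger than every $|\lambda_i|$ keeps $\mx{I}+\A_1/\mu$ nonsingular, so no eigenvalue is lost); and $\int_0^\infty f_Y=1$ then follows from the convolution identity together with $\int_0^\infty f_X=\int_0^\infty g(l,\mu,\cdot)=1$.

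It remains to choose $\mu$ so large that $f_Y>0$ on $[0,\infty)$. Writing $f_Y=\sum_{j=0}^l\binom lj\mu^{-j}f_X^{(j)}$ and using that every $f_X^{(j)}$ is bounded on $[0,\infty)$, one gets $|f_Y-f_X|\le M/\mu$ uniformly in $\mu\ge 1$; this already handles any compact subinterval of $(0,\infty)$, but not the two ends where $f_X$ itself tends to $0$. Near $0$ the argument is in fact clean: since $f_X^{(l)}(0)>0$ there are fixed $\delta>0,\ b>0$ with $f_X^{(l)}\ge b$ on $[0,\delta]$, and integrating up from $0$ (using $f_X^{(j)}(0)=0$ for $j<l$) shows $f_X^{(j)}\ge 0$ on $[0,\delta]$ for all $0\le j\le l$; hence on $[0,\delta]$ the sum defining $f_Y$ is bounded below by its $j=l$ term, so $f_Y\ge\mu^{-l}f_X^{(l)}\ge\mu^{-l}b>0$ for \emph{every} $\mu>0$. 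Near $\infty$ I would invoke the dominant eigenvalue condition: $f_X(x)/\big(x^{n_1-1}e^{-\lambda_1 x}\big)\to c_{\lambda_1,n_1}>0$, while each $f_X^{(j)}$ decays with rate at least $\lambda_1$ and, on the $\lambda_1$-block, polynomial degree at most $n_1-1$; so there are a fixed $T>0$ and constants $c_1,c_2>0$ with $f_X(x)\ge c_1 x^{n_1-1}e^{-\lambda_1 x}$ and $|f_X^{(j)}(x)|\le c_2 x^{n_1-1}e^{-\lambda_1 x}$ for all $x\ge T$ and $j\le l$, whence $f_Y(x)\ge(c_1-C/\mu)\,x^{n_1-1}e^{-\lambda_1 x}>0$ on $[T,\infty)$ for $\mu$ large. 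Finally on the compact $[\delta,T]$ we have $f_X\ge\varepsilon>0$, so $f_Y\ge\varepsilon-M/\mu>0$ for $\mu>M/\varepsilon$. Taking $\mu$ large enough for the last two estimates (and larger than every $|\lambda_i|$) makes $f_Y>0$ on all of $[0,\infty)$, which retroactively legitimizes the convolution identity and the normalization.

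The only genuinely delicate point is making the near-$\infty$ estimate uniform in $\mu$, i.e.\ choosing the threshold $T$ independently of $\mu$; this is exactly where the dominant eigenvalue condition enters, since it is what lets the single envelope $x^{n_1-1}e^{-\lambda_1 x}$ bound $f_X$ from below and all the derivatives $f_X^{(j)}$ from above simultaneously. (If one preferred not to invoke DEC here, the weaker fact that $f_X>0$ for large $x$ still forces the slowest-decaying contribution of $f_X$ to have a positive real coefficient, which suffices for the same lower bound.) By contrast, the near-$0$ behaviour needs no largeness of $\mu$ at all, and the middle range is just uniform convergence on a compact set; so the whole positivity claim reduces to the one quantitative asymptotic input about the dominant eigenvalue.
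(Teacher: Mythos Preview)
Your proposal is correct and follows essentially the same approach as the paper: the explicit deconvolution formula $f_Y=\sum_{j=0}^l\binom lj\mu^{-j}f_X^{(j)}$ (equivalently $f_Y(x)=-\av_1(\mx I+\A_1/\mu)^l e^{\A_1 x}\A_1\ones$), followed by the same three-region positivity argument (near $0$ via nonnegativity of all $f_X^{(j)}$ on a fixed $[0,\delta]$ requiring no largeness of $\mu$, on the tail via the dominant-eigenvalue envelope $x^{n_1-1}e^{-\lambda_1 x}$ with a threshold chosen independently of $\mu$, and on the compact middle by uniform convergence). Your explicit remark that the tail threshold must be chosen independently of $\mu$ is exactly the care the paper takes as well.
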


The proof of Lemma \ref{lemma:fnulla} is given in the Appendix. { The representation $(\av_2,\A_2)$ can be constructed either
from $f_Y$ via Lemma \ref{lemma:odavissza} or by using the fact that $\A_1=\A_2$ and calculating $\av_2$ from the appropriate linear equations.}

Lemma \ref{lemma:fnulla} and the following composition ensures that $f_X(x)$ and $f_Y(x)$
have a Markovian representation and satisfy the dominant eigenvalue condition
at the same time if $\mu > \lambda_1$.
\begin{lemma}
\label{lemma:composition}
If $f_Y(x)$ is ME distributed with representation $(\av_2,\A_2)$ of order $m$ and $\mu > \lambda_1$ then
$$f_X(x)=f_Y(x)\ast g(l,\mu,x),$$
is ME distributed with initial vector $\bv=\{1,0,0,\ldots, 0\}$ and generator matrix
\[\B=\left(
\begin{array}{cccc}
 -\mu & \mu    &        & \\
      & \ddots & \ddots &   \\
      &        & -\mu   & \mu \av_2 \\
      &        &        & \A_2  \\
\end{array}%
\right),
\]
where the first $l$ blocks of the matrix are of size one and the last block is of size $m$.
Additionally, if  $(\av_2,\A_2)$ is Markovian then $(\bv,\B)$ is Markovian as well.
\end{lemma}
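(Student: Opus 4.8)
The plan is to verify the claim of Lemma~\ref{lemma:composition} directly: first check that the density of the process described by $(\bv,\B)$ equals $f_Y \ast g(l,\mu,\cdot)$, and then check that the Markovian properties of $(\av_2,\A_2)$ are inherited by $(\bv,\B)$. For the density identity, the cleanest route is the probabilistic interpretation. The matrix $\B$ is block bi-diagonal: the first $l$ diagonal blocks are scalars $-\mu$, coupled to the next block with rate $\mu$, and the $(l,l+1)$ coupling feeds the rate-$\mu$ exit of the $l$-th Erlang phase into the ME-block via the initial vector $\av_2$; the last block is $\A_2$ itself. Starting from $\bv = (1,0,\dots,0)$, the time spent traversing the first $l$ phases is exactly $\mathrm{Erlang}(l,\mu)$-distributed, after which the process enters the ME-block with initial distribution $\av_2$ and is absorbed after an additional time distributed as $f_Y$. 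Since these two stretches are independent (the sojourn in the Erlang tail depends only on the first $l$ scalar phases, which are disjoint from the $\A_2$-block), the total absorption time has density $f_Y \ast g(l,\mu,\cdot) = f_X$, as claimed.

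Alternatively — and this is the version I would actually write, since the statement of the lemma does not assume $(\av_2,\A_2)$ is Markovian — one can prove the density identity purely algebraically. One verifies that
\[
-\bv e^{\B x}\B\1 \;=\; \int_0^x g(l,\mu,x-u)\,\bigl(-\av_2 e^{\A_2 u}\A_2\1\bigr)\,\mathrm{d}u
\]
by computing $e^{\B x}$ block by block using the upper-triangular block structure of $\B$: the top-left $l\times l$ corner is the Erlang generator, whose exponential is explicit, and the off-diagonal blocks connecting the Erlang part to the $\A_2$-block are obtained by the variation-of-constants / Duhamel formula for block-triangular matrix exponentials. This reduces the identity to the standard fact that the convolution of $l$ exponential phases of rate $\mu$ with the $\A_2$-block produces precisely the claimed integral. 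I would present just enough of this computation to make the block bookkeeping transparent and defer the routine integration.

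For the second assertion, assume $(\av_2,\A_2)$ is Markovian. Then $\bv=(1,0,\dots,0)$ is obviously a probability vector with non-negative entries. For $\B$: the diagonal entries are $-\mu<0$ in the first $l$ blocks and the diagonal entries of $\A_2$ in the last block, which are negative by assumption. The off-diagonal entries are the super-diagonal $\mu>0$ within the Erlang tail, the block $\mu\av_2 \geq 0$ (since $\mu>0$ and $\av_2\geq 0$), and the off-diagonal entries of $\A_2$, which are non-negative. Finally $\B\1 \leq 0$: in the first $l-1$ Erlang rows the row sum is $-\mu+\mu = 0$; in the $l$-th row it is $-\mu + \mu(\av_2\1) = -\mu + \mu = 0$ using $\av_2\1=1$; and in the $\A_2$-block the row sums equal $\A_2\1 \leq 0$. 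Non-singularity of $\B$ follows because its eigenvalues are $-\mu$ (with multiplicity $l$) together with the eigenvalues of $\A_2$, all of which are non-zero; equivalently, $\det\B = (-\mu)^l\det\A_2 \neq 0$ from the block-triangular form. This establishes that $(\bv,\B)$ is a valid Markovian (PH) representation.

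The only genuine subtlety — and the step I would flag as the main thing to get right rather than the main obstacle — is making the block-triangular matrix-exponential computation (or, in the probabilistic version, the independence of the Erlang sojourn from the $\A_2$-phase) fully rigorous, since everything else is a direct check against Definition~\ref{def:ph}. Note that the hypothesis $\mu>\lambda_1$ is not needed for the statement of this lemma per se; it is what guarantees (via the surrounding discussion) that the dominant eigenvalue condition is simultaneously preserved, and I would remark on this but not belabour it here.
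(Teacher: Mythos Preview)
Your proposal is correct and follows the same probabilistic route as the paper: the paper's proof is a single sentence observing that, by the structure of $\B$, the time to leave the first $l$ phases is $\mathrm{Erlang}(l,\mu)$ and the remaining time is ME$(\av_2,\A_2)$. You supply considerably more detail than the paper --- the algebraic alternative via block-triangular matrix exponentials, the explicit verification of the Markovian conditions for $(\bv,\B)$, and the remark on the role of $\mu>\lambda_1$ --- none of which appears in the paper's proof but all of which is sound and arguably fills gaps the paper leaves implicit (in particular, your observation that the probabilistic argument is not literally available when $(\av_2,\A_2)$ is non-Markovian is a fair point the paper ignores).
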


\begin{proof}
Based on the structure of $\B$, the time to leave the first $l$ phases is $\textrm{Erlang}(l,\mu)$ distributed and the time spent in the set of phases from $l+1$ to $m$ is ME$(\av,\A)$ distributed.
\end{proof}

Based on Lemma \ref{lemma:fnulla} and \ref{lemma:composition} it remains to prove
that the matrix exponential density function $f(x)$ with $f(0)>0$ satisfying the dominant eigenvalue and the positive density conditions
has a Markovian representation.

\subsection{Step 3: Markovian generator}

\begin{figure}[t]
\centerline{
\includegraphics[width=0.8\textwidth]{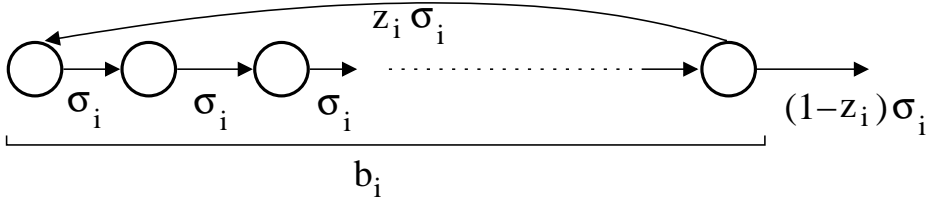}
}
\caption{FE-diagonal block.}
\label{fig:fe_block}
\end{figure}

\begin{figure}[t]
\centerline{
\includegraphics[width=0.8\textwidth]{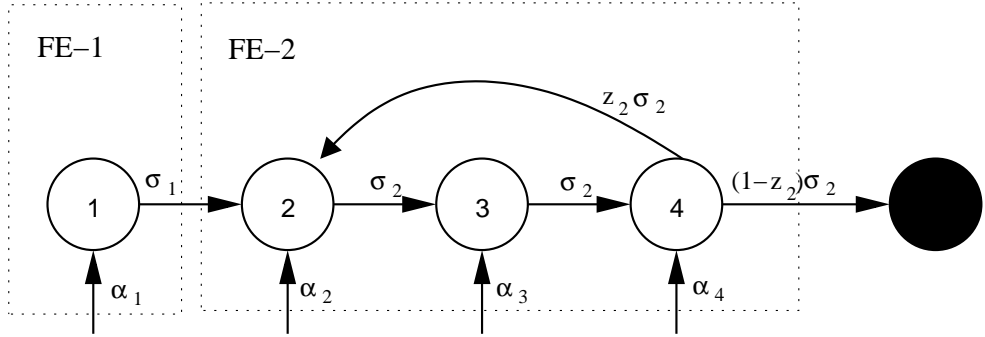}
}
\caption{FE-diagonal representation of a generator with a real eigenvalue ($\sigma_1$) and a pair of complex ones. }
\label{fig:fe_representation}
\end{figure}

The aim of this subsection is to transform the potentially non-Markovian representation $(\av,\mx{A})$ of
a ME distribution to a representation $(\gv,\mx{G})$ where $\mx{G}$ is a Markovian transient generator matrix
satisfying the properties of the matrix of a PH distribution (Definition \ref{def:ph}).
For matrix $\mx{G}$, we apply the matrix structure proposed in \cite{Mocanu99}.
It is a block bi-diagonal matrix structure, where each block represents
a real eigenvalue or a pair of complex conjugate eigenvalues of $\mx{A}$.
The blocks associated with real eigenvalue $-\lambda_i$ ($-\lambda_i<0$) are of size one,
the diagonal element is $-\lambda_i$
and the first sub-diagonal element is $\lambda_i$.
The blocks associated with complex eigenvalues are composed by Feedback-Erlang (FE)
blocks.
\begin{definition}\cite{Mocanu99}
A {\em Feedback-Erlang (FE) block} with parameters $(b, \sigma, z)$ is
a chain of $b$ states with transition rate $\sigma$ and one transition
from the $b$th state to the first state, with rate $z \sigma$ (c.f. \Figref{fig:fe_block}). The probability
$z\in[0,1)$ is called the feedback probability.
\end{definition}
A FE block $(b, \sigma, z)$ with length $b=1$ and $z=0$
corresponds to a real eigenvalue $-\sigma$ and is referred to as {\em degenerate} FE blocks.
Matrix $\mx{G}$ contains as many FE blocks (degenerate or non-degenerate) associated with a real eigenvalue or a pair of complex conjugate eigenvalues
as the multiplicity of the eigenvalue.
A non-degenerate FE block where $b$ is odd has a real eigenvalue and $(b-1)/2$ complex conjugate eigenvalue pairs.
A non-degenerate FE block where $b$ is even has 2 real eigenvalues and $(b-2)/2$ complex conjugate eigenvalue pairs.
In both cases the eigenvalues are equidistantly located on a circle in the complex plane around $-\sigma$.
The dominant eigenvalue of the FE block (the one with the largest real part) with parameters $(b, \sigma, z)$ is always real and
equals to $r = -\sigma\left(1-z^{\frac{1}{b}}\right)$~\cite{Mocanu99}.
Denote the eigenvalues of matrix $\mx{A}$ by $-\lambda_j$; the dominant eigenvalue (which is real) is $-\lambda_1$.
The FE blocks representing the eigenvalues are composed as follows
\begin{itemize}
\item if $\lambda_j$ is real, the corresponding FE block is a degenerate block; thus the parameters are:
\begin{align*}
\sigma_j=\lambda_j,~~~b_j=1,~~~z_j=0,
\end{align*}
\item if $\lambda_j=a_j\pm \ii c_j$ ($a_j>\lambda_1>0, c_j>0$) is a complex conjugate pair, the parameters are:
\begin{align*}
b_j&=\left\lceil \ds\frac{2 \pi}{\pi - 2 \arctan\left( \ds\frac{c_j}{-\lambda_1 + a_j} \right)}  \right\rceil,\\
\sigma_j&=\frac{1}{2}\left(- 2a_j-c_j\tan\frac{\pi}{b_j}+c_j\cot\frac{\pi}{b_j}\right),\\
z_j&=\left(1-\left(- a_j-c_j\tan\frac{\pi}{b_j}\right)/(2 \sigma_j) \right)^{b_j},
\end{align*}
where $\lceil x \rceil$ denotes the smallest integer { greater than or equal to $x$}.
\end{itemize}

This construction of the FE blocks ensures that $\lambda_1$ remains the dominant eigenvalue of matrix $\mx{G}$,
that is, the dominant eigenvalue of any FE block ($r_j$) is less than $-\lambda_1$ except the one(s) associated with $-\lambda_1$.

Connecting the obtained FE blocks such that the exit transition of an FE block (whose rate is $\lambda_j (1-z_j)$,
see \Figref{fig:fe_block}, in case of non-degenerate FE block and $\lambda_j$ in case of a degenerate one) is
connected to the first state of the next FE block composes a block bi-diagonal matrix
(c.f. \Figref{fig:fe_representation}). The obtained matrix $\mx{G}$ is Markovian and its Jordan form contains all Jordan blocks of
matrix $\mx{A}$. We order the FE blocks such that the first $n_1$ FE blocks are the $n_1$ degenerate FE blocks
associated with $-\lambda_1$.
The order of the rest of the FE blocks are irrelevant.
The FE blocks based finite Markovian representation of the eigenvalues of $\A$ is always feasible when the dominant eigenvalue condition holds.
If there was a pair of complex conjugate eigenvalues $a_j\pm \ii c_j$ which violates the dominant eigenvalue condition
such that $a_j=\lambda_1$ then the denominator of $b_j$ would be zero.

\Figref{fig:fe_representation} depicts an example of a Markovian generator which is the monocyclic
representation of a generator
with a dominant real eigenvalue ($-\lambda_1=-\sigma_1$) and a pair of complex conjugate eigenvalues in FE-diagonal
form. In this representation there are two FE blocks, one of length $b_1=1$ with
rate $\sigma_1$,
and one of length $b_2=3$ with rate $\sigma_2$ and feedback probability $z_2$. The associated generator matrix is
\[ \mx{G}=\left(%
\begin{array}{c|ccc}
  -\sigma_1 & \sigma_1 & 0 & 0 \\
  \hline
  0 & -\sigma_2 & \sigma_2 & 0 \\
  0 & 0 & -\sigma_2 & \sigma_2 \\
  0 & z \sigma_2 & 0 & -\sigma_2 \\
\end{array}%
\right)~.
\]

In order to find an equivalent representation of ME($\av, \mx{A}$) with  matrix $\mx{G}$ we need to compute
vector $\gv$, for which ME($\av, \mx{A}$) $\equiv$ ME($\gv, \mx{G}$),
with the help of Theorem \ref{thm:trafo}.
Let $n$ and $m$ ($n\leq m$) be the order of $\mx{A}$ and $\mx{G}$, respectively.
Compute matrix $\mx{\widehat{W}}$ of size $n\times m$ as the unique solution of
\begin{align*}
\mx{A}\mx{\widehat{W}}=\mx{\widehat{W}}\mx{G},~~~~~\mx{\widehat{W}}\1=\1,
\end{align*}
and based on $\mx{\widehat{W}}$  vector $\gv$ is
\begin{align*}
\gv = \av \mx{\widehat{W}}.
\end{align*}
Since $\mx{G}$ is Markovian, the obtained ($\gv, \mx{G}$) representation is Markovian if $\gv$ is non-negative,
but this is not necessarily the case. The case when $\gv$ has negative elements is considered in the following subsection.

\subsection{Step 4: Markovian vector}

At this point in the algorithm, the ME distribution is described by representation ($\gv, \mx{G}$) of order $u$
which has a block bi-diagonal, Markovian matrix $\G$, and a vector $\gv$
with at least one negative element. In the next step we extend the $(\gv,\G)$
representation with an additional $n$ phases in the following way.
\begin{equation}\label{eq:B}
\B=\left(
\begin{array}{cccc}
  \mx{G} & - \mx{G} \1     &         & \\
         & -\lambda & \lambda & \\
         &          & \ddots & \ddots  \\
         &          &        & -\lambda\\
\end{array}%
\right),
\end{equation}
where  $\B$ is of order $u+n$ (the size of the upper left block of
$\mx{B}$ is $u$, the remaining $n$ blocks are of size
one). $- \mx{G} \1$ is a non-negative column vector of size $u$. Due
to the structural properties of $\mx{G}$ it contains exactly one non-zero element,
which is the last element and it contains the exit rate from the last FE block.
The transformation matrix $\W$ of size $u \times (u+n)$, which transform from representation
($\gv, \mx{G}$) to representation ($\bv, \mx{B}$)
is the unique solution of $\G \W =\W \B$, $\W \1_{u+n} =\1_n$.
Fortunately, due to the special structure of matrix
$\mx{B}$, $\W$ is rather regular.

\begin{lemma} $\W$ has the following form:
\begin{align*}
\W=\left(%
\begin{array}{c|c|c|c|c}
  \left(\I+\frac{\G}{\lambda}\right)^n &  \left(\I+\frac{\G}{\lambda}\right)^{n-1}  \frac{-\G\1}{\lambda}
  &  \left(\I+\frac{\G}{\lambda}\right)^{n-2}  \frac{-\G\1}{\lambda}   & \ldots  &   \frac{-\G\1}{\lambda} \\
\end{array}%
\right),
\end{align*}
where the size of the first block is $u \times u$, the size of the remaining blocks is $1 \times u$.
\end{lemma}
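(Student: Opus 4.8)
The plan is to verify directly that the claimed block-row vector $\W$ satisfies the two defining equations $\G\W = \W\B$ and $\W\1_{u+n} = \1_u$, and then invoke uniqueness of the solution (which follows from Theorem~\ref{thm:trafo} and the fact that $\B$ shares no structure obstruction; more concretely, the pair of equations $\G\W=\W\B$, $\W\1=\1$ has a unique solution because $\G$ and $\B$ have disjoint spectra away from the shared eigenvalues, or simply because the construction of $\B$ as an extension guarantees it — in any case the excerpt already asserts ``the unique solution,'' so I only need to check that the displayed $\W$ \emph{is} a solution). Write $\W = (\W_0 \mid \W_1 \mid \cdots \mid \W_n)$ where $\W_0 = (\I+\G/\lambda)^n$ is $u\times u$ and $\W_j = (\I+\G/\lambda)^{n-j}(-\G\1/\lambda)$ for $j=1,\dots,n$ are columns of height $u$.

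First I would check the row-sum condition $\W\1_{u+n}=\1_u$. This amounts to showing $\W_0\1_u + \sum_{j=1}^n \W_j = \1_u$. Factor out: $\sum_{j=1}^n \W_j = \left(\sum_{j=1}^n (\I+\G/\lambda)^{n-j}\right)(-\G\1/\lambda)$. Using the finite geometric series identity $\sum_{k=0}^{n-1}(\I+\G/\lambda)^k \cdot (-\G/\lambda) = \I - (\I+\G/\lambda)^n$ (valid as a polynomial identity in $\G$, since $(\I+\G/\lambda) - \I = \G/\lambda$), we get $\sum_{j=1}^n \W_j = (\I - (\I+\G/\lambda)^n)\1_u = \1_u - \W_0\1_u$, which gives exactly what is needed. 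This step is routine.

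Next I would check $\G\W = \W\B$ block-column by block-column. Partition $\B$ as in~\eqref{eq:B}: its first block-column (width $u$) is $\binom{\G}{0}$ stacked appropriately, i.e. within the row indexed by the first $u$ coordinates it contributes $\G$ and the Erlang-chain rows contribute nothing to that column except that $\B$'s $(u{+}1,\text{first-block})$ entry is zero — actually the coupling $-\G\1$ sits in the first block-row, column $u+1$. So: column-block $0$ of $\W\B$ is $\W_0\G$ (since only $\W_0$ multiplies the first block-column of $\B$, whose only nonzero sub-block is $\G$ in the top-left). Column-block $0$ of $\G\W$ is $\G\W_0$. Since $\W_0 = (\I+\G/\lambda)^n$ is a polynomial in $\G$, it commutes with $\G$, so $\G\W_0 = \W_0\G$. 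For column-block $1$ (the first Erlang phase): $\B$'s column $u+1$ has entry $-\G\1$ in the top block and $-\lambda$ on its own diagonal. So $(\W\B)$'s column-block $1$ is $\W_0(-\G\1) + \W_1(-\lambda) = -\W_0\G\1 - \lambda(\I+\G/\lambda)^{n-1}(-\G\1/\lambda) = -\W_0\G\1 + (\I+\G/\lambda)^{n-1}\G\1$. Meanwhile $(\G\W)$'s column-block $1$ is $\G\W_1 = \G(\I+\G/\lambda)^{n-1}(-\G\1/\lambda) = -(\I+\G/\lambda)^{n-1}(\G^2\1/\lambda)$. Equating and simplifying: we need $-\W_0\G\1 + (\I+\G/\lambda)^{n-1}\G\1 = -(\I+\G/\lambda)^{n-1}(\G^2/\lambda)\1$, i.e. $\W_0\G\1 = (\I+\G/\lambda)^{n-1}\G\1 + (\I+\G/\lambda)^{n-1}(\G^2/\lambda)\1 = (\I+\G/\lambda)^{n-1}\G(\I+\G/\lambda)\1 = (\I+\G/\lambda)^n\G\1$, which holds since $\W_0=(\I+\G/\lambda)^n$. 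For column-block $j$ with $2\le j\le n$: $\B$'s column $u+j$ has entry $\lambda$ in row $u+j-1$ and $-\lambda$ in row $u+j$, so $(\W\B)$'s column-block $j$ is $\lambda\W_{j-1} - \lambda\W_j = \lambda(\I+\G/\lambda)^{n-j}\big[(\I+\G/\lambda) - \I\big](-\G\1/\lambda) = (\I+\G/\lambda)^{n-j}\G(-\G\1/\lambda)$, while $(\G\W)$'s column-block $j$ is $\G\W_j = \G(\I+\G/\lambda)^{n-j}(-\G\1/\lambda) = (\I+\G/\lambda)^{n-j}\G(-\G\1/\lambda)$ by commutativity; these agree. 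That exhausts all block-columns.

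The only genuine subtlety — and the step I would be most careful about — is bookkeeping the block structure of $\B$ correctly: which Erlang-chain column carries the $-\G\1$ coupling term, and getting the boundary cases $j=1$ (coupling to the $\G$-block) and $j=n$ (last phase, no outgoing $\lambda$) right, since an off-by-one there breaks everything. Everything else reduces to the single observation that $\W_0$ and every $\W_j$ are polynomials in $\G$, hence commute with $\G$, together with the telescoping identity $(\I+\G/\lambda)^{k}(\G/\lambda) = (\I+\G/\lambda)^{k+1} - (\I+\G/\lambda)^{k}$. Once both defining equations are verified, uniqueness (already asserted in the text) finishes the proof.
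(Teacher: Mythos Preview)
Your proposal is correct and follows exactly the paper's approach: the paper's proof is the single sentence ``Substituting this expression of $\W$ into $\G\W=\W\B$ and $\W\1_{u+n}=\1_n$ results in identities,'' and you have simply carried out that substitution block by block. The telescoping and commutativity observations you use are precisely what make the substitution work, so there is no meaningful difference in method.
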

\begin{proof}
Substituting this expression of $\W$ into $\G \W =\W \B$ and $\W \1_{u+n} =\1_n$ results {in} identities.
\end{proof}

Our goal is to find $n$ and $\lambda$ such that $\bv=\gv \W$ is Markovian (that is non-negative), where
\begin{align}
\label{Walak}
\gv \mx{W}=\left(%
\begin{array}{c|c|c|c|c}
 \gv  \left(\I+\frac{\G}{\lambda}\right)^n &  \gv  \left(\I+\frac{\G}{\lambda}\right)^{n-1}\frac{-\G\1}{\lambda}
 & \gv  \left(\I+\frac{\G}{\lambda}\right)^{n-2}  \frac{-\G\1}{\lambda} & \ldots  & \gv ~\frac{-\G\1}{\lambda} \\
\end{array}%
\right).
\end{align}
The first block of this vector is of size $u$ and the remaining $n$ blocks are of size $1$.
We need to prove that this vector is nonnegative for an appropriate pair $(\lambda,n)$.

\begin{theorem}
\label{thm:lambdan}
There exists a pair $(\lambda,n)$ such that $\gv\W$ is strictly positive.
\end{theorem}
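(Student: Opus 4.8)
The plan is to make the informal four-bullet ``skeleton'' of Step 4 into a rigorous argument by controlling the two parts of the vector $\gv\W$ in (\ref{Walak}) separately: the ``head'' block $\gv\left(\I+\frac{\G}{\lambda}\right)^{n}$ of size $u$, and the $n$ scalar ``tail'' entries $\gv\left(\I+\frac{\G}{\lambda}\right)^{k}\frac{-\G\1}{\lambda}$ for $k=0,1,\dots,n-1$. The unifying device is the classical limit $\left(\I+\frac{\G}{\lambda}\right)^{\lambda t}\to e^{\G t}$ as $\lambda\to\infty$, uniform for $t$ in a compact interval, together with the fact (to be invoked from the preceding part of the paper, just after Lemma~\ref{lemma:csillag}) that there is a finite $\tau>0$ with $\gv e^{\G\tau}>0$ element-wise. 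The parameter coupling will be $n=\lceil\lambda\tau\rceil$, so that the head block is essentially $\gv e^{\G\tau}$ and each tail entry with index $k\approx\lambda t$, $t\in(0,\tau)$, is essentially $-\gv e^{\G t}\G\1=f_X(t)>0$ (using that after Steps~1--2 we have $f_Y$, resp.\ $f_X$, strictly positive on $[0,\tau]$, hence bounded below by some $\epsilon>0$ there).

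Concretely I would proceed as follows. First, fix $\tau$ with $\gv e^{\G\tau}>0$; let $\delta>0$ be the minimum entry of that vector. By the uniform convergence $\left\|\gv\left(\I+\frac{\G}{\lambda}\right)^{\tau\lambda}-\gv e^{\G\tau}\right\|\to 0$, choose $\lambda'$ so that for all $\lambda\ge\lambda'$ the head block $\gv\left(\I+\frac{\G}{\lambda}\right)^{n}$ (with $n=\lceil\lambda\tau\rceil$, noting $\left(\I+\frac{\G}{\lambda}\right)$ has nonnegative entries for $\lambda$ large, so rounding $\tau\lambda$ up only moves things by a vanishing amount) stays within $\delta/2$ of $\gv e^{\G\tau}$ entrywise, hence is strictly positive. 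Second, handle the tail: set $\epsilon=\inf_{x\in(0,\tau)}f_X(x)>0$ (positive by the positive density condition and the outcome of Step~2, which guarantees $f_X(0)>0$ as well, so the infimum over the closed interval is positive). For the tail entry at index $k$ write $x_k=k/\lambda\in(0,\tau]$; the exact scalar is $-\gv\left(\I+\frac{\G}{\lambda}\right)^{k}\frac{\G\1}{\lambda}$, which I compare to $\frac{1}{\lambda}f_X(x_k)=\frac{1}{\lambda}(-\gv e^{\G x_k}\G\1)$. Dividing out the common $\frac1\lambda$, it suffices that $\left|-\gv\left(\I+\frac{\G}{\lambda}\right)^{k}\G\1+\gv e^{\G x_k}\G\1\right|<\epsilon$ for all $k\le n$; again uniform convergence on $[0,\tau]$ (the map $t\mapsto\left(\I+\frac{\G}{\lambda}\right)^{\lfloor\lambda t\rfloor}$ converges uniformly to $e^{\G t}$, and $\gv\G\1$ is a fixed vector) gives a $\lambda''$ making this hold. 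Finally take $\lambda=\max(\lambda',\lambda'')$ and $n=\lceil\lambda\tau\rceil$: both the head block and all $n$ tail entries of $\gv\W$ are strictly positive, so $\bv=\gv\W$ is Markovian, and by the transformation Lemma (Theorem~\ref{thm:trafo}, via $\G\W=\W\B$, $\W\1=\1$) the pair $(\bv,\B)$ represents the same distribution.

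The main obstacle is the uniform convergence estimate in the form actually needed here, namely controlling $\left(\I+\frac{\G}{\lambda}\right)^{k}$ uniformly over \emph{all} integers $k$ in the range $0\le k\le\lceil\lambda\tau\rceil$ as $\lambda\to\infty$, rather than at a single scaled time. One clean way is to note $\left(\I+\frac{\G}{\lambda}\right)^{k}=e^{k\log(\I+\G/\lambda)}$ and that $k\log(\I+\G/\lambda)=\frac{k}{\lambda}\G+O(k/\lambda^{2})=x_k\G+O(\tau/\lambda)$ uniformly, so $\left\|\left(\I+\frac{\G}{\lambda}\right)^{k}-e^{x_k\G}\right\|\le C(\tau,\G)/\lambda$ with a constant depending only on $\tau$ and $\|\G\|$; this single bound simultaneously feeds the head estimate (at $k=n$, $x_n\approx\tau$) and every tail estimate. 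The only subtlety to flag is that $\G$ need not be diagonalizable --- it carries the Jordan structure of $\A$ --- but the logarithm/exponential manipulation and the matrix-norm bounds above are insensitive to that, so no extra hypothesis is required; strict positivity of $\gv e^{\G\tau}$ (from the earlier lemma, which crucially uses that the representation came out of Step~3's monocyclic construction) and $\epsilon>0$ (positive density plus Step~2) are exactly the two inputs that make the construction terminate.
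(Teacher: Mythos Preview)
Your proposal is correct and follows essentially the same approach as the paper: fix $\tau$ with $\gv e^{\G\tau}>0$, then pick $\lambda$ large so that both the head block and all tail entries of $\gv\W$ inherit positivity via the uniform approximation $(\I+\G/\lambda)^k\approx e^{k\G/\lambda}$ for $0\le k\le\lambda\tau$, finally setting $n=\lceil\lambda\tau\rceil$. The only differences are that the paper derives the uniform $O(1/\lambda)$ bound by a direct series estimate (Lemma~\ref{lemma:kozelites}) rather than your matrix-logarithm route, and it proves the existence of $\tau$ \emph{inside} the proof (first showing $\gv_1>0$ from minimality and then invoking Lemma~\ref{lemma:csillag}) rather than treating it as a fact already established beforehand.
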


The rest of this subsection  is devoted to proving Theorem
\ref{thm:lambdan}. We assume everything that was done so far, for example\
that the dominant eigenvalue condition and the positive density
condition hold, the density is positive at zero and also that the matrix $\mx{G}$ is Markovian and in
monocyclic form such that the degenerate FE block(s) representing the dominant eigenvalue $-\lambda_1$ are the first one(s).
First we present a heuristic argument, then the formal proof.

\subsubsection{Heuristic argument}

$\lambda$ and $n$ are  typically chosen to be large (see
\cite{Mocanu99}). However, finding an appropriate pair is not as
simple as choosing some large $\lambda$ and a large $n$. For each
$n$, the set of appropriate values of $\lambda$ forms a finite
interval. If $n$ is large enough, this interval is nonempty, but --
without further considerations -- it is impossible to identify this
interval (or even one element of it). Vice versa, for each $\lambda$
there is a finite set of appropriate values for $n$. This means that
the naive algorithm of increasing the values of $n$ and $\lambda$ --
without further considerations -- may possibly never yield an
appropriate pair. For this reason, we instead propose a different
parametrization, which takes the dependence between $n$ and
$\lambda$ into account better.

Let $\tau=n/\lambda$.  $\tau$ turns out to be a value interesting in
its own right. The ME pdf resulting from the pair $(\gv\W,\B)$ has a
term coming from the first block of $\B$ and it has $n$ terms coming
from the Erlang-tail. We argue that the terms coming from the
Erlang-tail can be regarded as an approximation of the original pdf
on the interval $[0,\tau]$, while the term coming from the first
block is some sort of correction that makes the approximation
exactly equal to the original pdf. Each of the terms in the
Erlang-tail contribute an Erlang pdf with rate $\lambda$ and order
$k\in [1,\dots, n]$ to the pdf. The Erlang($\lambda,k$) pdf is
concentrated around the point $\frac{k}{\lambda}=\frac{k \tau}{n}$. These
points are situated along the interval $[0,\tau]$ in an equidistant
way with distance $\frac{1}{\lambda}$.

The weight (initial probability) of the Erlang pdf centered around the point $\frac{k}{n}\tau$ is
$$\gv  \left(\I+\frac{\G}{\lambda}\right)^k  \frac{-\G\1}{\lambda} \approx
\gv  e^{\frac{k\tau}{n}\G}
 \frac{-\G\1}{\lambda}=\frac{1}{\lambda}f_X\left(\frac{k\tau}{n}\right),$$
which means that the weights are approximately equal to samples of the original pdf
at points $\frac{k\tau}{n}$, $k\in [1,\dots, n]$ divided by $\lambda$, resulting in a pdf that is approximately
equal to the original along the interval $[0,\tau]$.

The first block of  $\gv \W$ is different. From the form of $\B$ it
is clear that the contribution of the first block is concentrated
after the point $\tau$; the role of this block is essentially to make
a correction in the interval $[\tau,\infty]$, where the previous
Erlang-approximation does not hold.

Altogether the previous argument can be depicted nicely in Figures \ref{fig:appr_pdf} and \ref{fig:neg_pdf}. We denote
$$f_k(x)=\gv  \left(\I+\frac{\G}{\lambda}\right)^k  \frac{-\G\1}{\lambda} ~~g(k,\lambda,x) ,\quad k=0,\dots, n-1$$
the approximating Erlang terms and
$$f_0(x)=\gv  \left(\I+\frac{\G}{\lambda}\right)^n e^{-\G x} (-\G \1) \ast g(n,\lambda,x)$$
the correction term.  In Figure \ref{fig:appr_pdf}, the
approximating Erlang terms roughly follow the graph of $f_X$, while
$f_0$ is concentrated after $\tau$. (The values are $\tau=3,
\lambda=12$ and $n=36$; to make the figure visually apprehensible,
only some of the approximating Erlang functions were included with
slightly increased weights.)

\begin{figure}[t]
\centerline{
\includegraphics[width=0.8\textwidth]{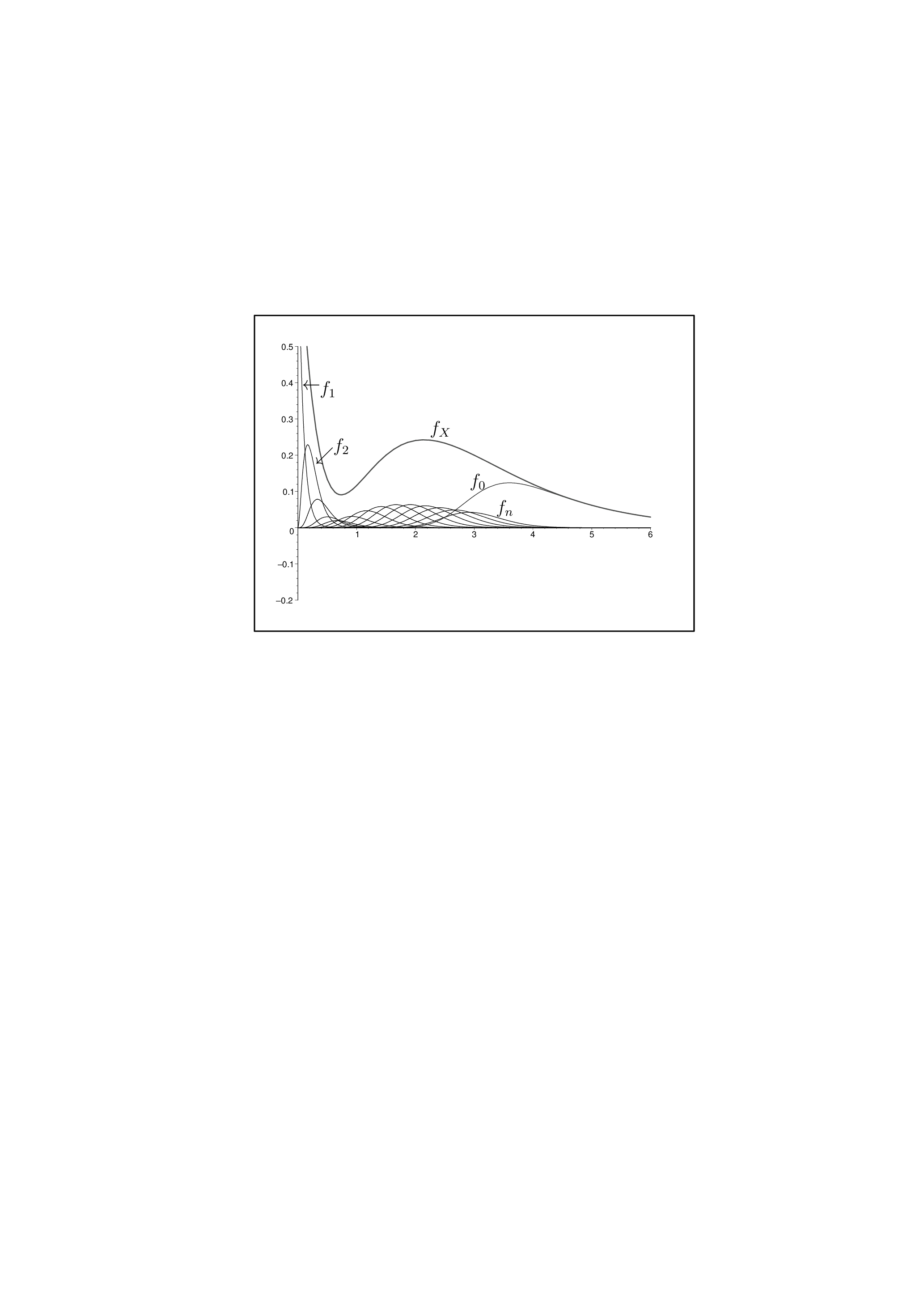}
}
\caption{Erlang pdf's approximating the original one}
\label{fig:appr_pdf}
\end{figure}

The value of $\lambda$  controls how concentrated the approximating
Erlang pdf's are and also controls how close their weights are to
the sampling of the original pdf. Given that $f_X(x)>0$ for $x>0$,
this means that for \emph{any} choice of $\tau$, the
Erlang-approximation has positive weights if $\lambda$ is large
enough. The choice of $\tau$ is only important to make sure that the
weights assigned to the correction term are also positive. Figure
\ref{fig:neg_pdf} shows an example where $\lambda$ is too small
(notably $\lambda=4$). In this case, some of the approximating
Erlang functions have negative coefficients.

\begin{figure}[t]
\centerline{
\includegraphics[width=0.8\textwidth]{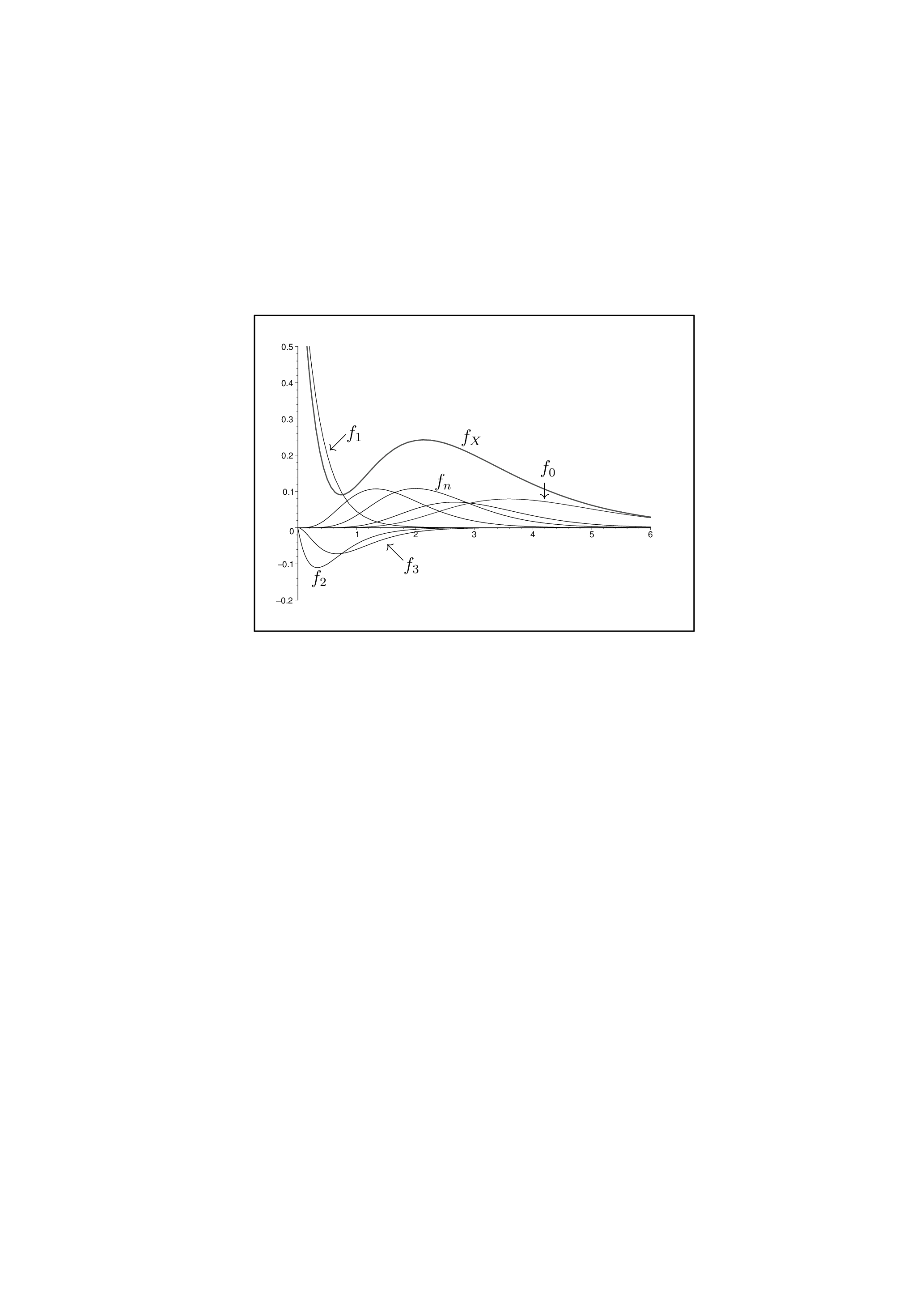}
}
\caption{If $\lambda$ is too small, some Erlang pdf's are negative}
\label{fig:neg_pdf}
\end{figure}

\subsubsection{Formal proof}

Before the actual proof, some results are stated as standalone lemmas. Their proofs are in Appendix \ref{sec:appc}.

The first one is essentially a real approximation, so we state it in
that form too,  along with the matrix version which is useful for
our purposes. Relevant properties of matrix (and vector) norms can
be found in Appendix \ref{sec:appa}.

\begin{lemma}
\label{lemma:kozelites}

\begin{enumerate}
\item[i)] For any fixed $r>0$ and positive integer $n$,
$$\sup_{|z|\leq r}\left|e^z-\left(1+\frac{z}{n}\right)^n\right|\leq   \frac{r^2e^{r}}{2n},$$
and the supremum is obtained at $z=r$.
\item[ii)] For any $\H$ square matrix,
$$\left\| {e^{\H}} - \left(\I+\frac{\H}{n}\right)^n\right\| \leq  \frac{r^2e^{r}}{2n},$$
where $r=\|\H\|$.
\end{enumerate}
\end{lemma}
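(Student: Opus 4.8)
The plan is to prove part~(i) by a direct power-series comparison and then deduce part~(ii) from it by replacing scalar powers with matrix norms. For part~(i), fix $r>0$ and a positive integer $n$, and take $z$ with $|z|\le r$. First I would expand both $e^z=\sum_{k\ge 0} z^k/k!$ and $(1+z/n)^n=\sum_{k=0}^n \binom{n}{k} z^k/n^k$ and subtract termwise. The constant and linear terms cancel exactly (since $\binom{n}{0}=1$ and $\binom{n}{1}/n=1$), so
$$
e^z-\Bigl(1+\frac{z}{n}\Bigr)^n=\sum_{k=2}^{\infty} a_k\, z^k,\qquad
a_k=\frac{1}{k!}-\binom{n}{k}\frac{1}{n^k}\ \ (k\le n),\quad a_k=\frac{1}{k!}\ \ (k>n).
$$
The key elementary fact is that $0\le \binom{n}{k}/n^k\le 1/k!$ for every $k$, so each coefficient satisfies $0\le a_k\le 1/k!$; hence by the triangle inequality $\bigl|e^z-(1+z/n)^n\bigr|\le \sum_{k\ge 2} a_k r^k$, and this bound is attained at $z=r$ because there all terms are nonnegative. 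It remains to show $\sum_{k\ge 2} a_k r^k\le r^2 e^r/(2n)$.

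For this tail estimate I would bound $a_k$ more carefully. Writing $\binom{n}{k}/n^k=\frac{1}{k!}\prod_{j=0}^{k-1}\bigl(1-\tfrac{j}{n}\bigr)$ and using $\prod_{j=0}^{k-1}(1-\tfrac{j}{n})\ge 1-\sum_{j=0}^{k-1}\tfrac{j}{n}=1-\tfrac{k(k-1)}{2n}$, we get $a_k\le \frac{1}{k!}\cdot\frac{k(k-1)}{2n}=\frac{1}{2n}\cdot\frac{1}{(k-2)!}$ for $2\le k\le n$; and for $k>n$ one has trivially $\tfrac{k(k-1)}{2n}\ge 1$, so the same inequality $a_k=\tfrac1{k!}\le \tfrac1{2n(k-2)!}$ holds there too. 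Therefore
$$
\sum_{k=2}^{\infty} a_k r^k\le \frac{1}{2n}\sum_{k=2}^{\infty}\frac{r^k}{(k-2)!}=\frac{r^2}{2n}\sum_{m=0}^{\infty}\frac{r^m}{m!}=\frac{r^2 e^r}{2n},
$$
which is exactly the claimed bound, and the attainment at $z=r$ was already noted. This completes part~(i).

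For part~(ii), let $\H$ be a square matrix and $r=\|\H\|$ for a submultiplicative norm (consistent with the conventions of Appendix~\ref{sec:appa}). The same termwise cancellation gives $e^{\H}-(\I+\H/n)^n=\sum_{k\ge 2} a_k \H^k$ with the identical scalar coefficients $a_k\ge 0$ computed above. Submultiplicativity yields $\|\H^k\|\le \|\H\|^k=r^k$, so
$$
\Bigl\|\,e^{\H}-\Bigl(\I+\frac{\H}{n}\Bigr)^n\Bigr\|\le \sum_{k=2}^{\infty} a_k\,\|\H\|^k\le \frac{r^2 e^r}{2n}
$$
by the scalar estimate just proved. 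I expect the only mildly delicate point to be the uniform coefficient bound $a_k\le \tfrac{1}{2n(k-2)!}$ holding simultaneously for $k\le n$ and $k>n$ — handling the two ranges by the single inequality $\prod_{j<k}(1-j/n)\ge \max\{0,\,1-\tfrac{k(k-1)}{2n}\}$ is what makes the tail sum collapse cleanly; everything else is routine series manipulation.
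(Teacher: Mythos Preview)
Your proof is correct and follows the same overall structure as the paper's: expand both functions as power series, observe that the coefficients $a_k=\tfrac{1}{k!}B(n,k)$ are nonnegative so the supremum over $|z|\le r$ is attained at $z=r$, and then bound the scalar quantity $e^r-(1+r/n)^n$; part~(ii) follows by the identical series argument with $\|\H\|^k$ in place of $|z|^k$.

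The one genuine difference is in how the final bound $e^r-(1+r/n)^n\le r^2e^r/(2n)$ is obtained. You bound each coefficient individually via $\prod_{j<k}(1-j/n)\ge 1-\tfrac{k(k-1)}{2n}$, giving $a_k\le \tfrac{1}{2n(k-2)!}$, and then sum. The paper instead writes $e^r-(1+r/n)^n=e^r-e^{n\ln(1+r/n)}$ and uses the elementary inequalities $\ln(1+x)\ge x-x^2/2$ and $1-e^{-y}\le y$ to get the same bound in two lines. Your termwise estimate is slightly more hands-on but has the advantage of never leaving the power-series setting (so part~(ii) is literally the same computation); the paper's logarithmic trick is shorter and makes the asymptotic sharpness of the constant transparent. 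Both are perfectly valid and equally elementary.
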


We state one more lemma. It identifies the main terms in $e^{\G x}$ when $\G$ is in monocyclic form.

\begin{lemma}
\label{lemma:csillag}
\begin{align*}
& \left(e^{\G x}\right)_{1j} \sim C_j x^{j-1}e^{-\lambda_1 x} \quad \textrm{ if } 1  \leq j \leq n_1\\
& \left(e^{\G x}\right)_{1j} \sim C_j x^{n_1-1}e^{-\lambda_1 x} \quad \textrm{ if } n_1 <  j \leq u , \\
& \lim_{x\to\infty}\frac{\left(e^{\G x}\right)_{ij}}{\left(e^{\G x}\right)_{1j}}=0\quad \textrm{ if } \quad 2\leq i\leq u, ~~ 1 \leq j \leq u ,
\end{align*}
where $C_j$ denote positive (combinatorial) constants and $f(x) \sim g(x)$ denotes that $\lim_{x\to\infty} f(x)/g(x)=1$.
The last relation means that the first row dominates all other rows as $t$ tends to infinity.
\end{lemma}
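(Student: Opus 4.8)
The plan is to combine the block upper-triangular structure of the monocyclic generator $\G$ with the reading of $e^{\G x}$ as the transient state probabilities of the associated CTMC. Although each non-degenerate FE block contains a feedback arc, that arc stays inside the block, so at the level of FE blocks $\G$ is block bidiagonal, hence block upper triangular; thus $e^{\G x}$ is block upper triangular with the same pattern, $(e^{\G x})_{ij}=0$ whenever the block of $i$ is strictly after the block of $j$, and otherwise $(e^{\G x})_{ij}$ depends only on the principal submatrix of $\G$ carried by the blocks between (and including) the blocks of $i$ and $j$. Since the first $n_1$ blocks are the degenerate blocks attached to $-\lambda_1$, the principal submatrix of $\G$ on states $1,\dots,n_1$ equals $-\lambda_1\I+\lambda_1\mx{N}$ with $\mx{N}$ the nilpotent upshift, so $e^{(-\lambda_1\I+\lambda_1\mx{N})x}=e^{-\lambda_1 x}\sum_k \frac{(\lambda_1 x)^k}{k!}\mx{N}^k$ gives, for $1\le i\le j\le n_1$, the exact identity $(e^{\G x})_{ij}=\frac{\lambda_1^{\,j-i}}{(j-i)!}x^{j-i}e^{-\lambda_1 x}$. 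Taking $i=1$ yields the first relation with $C_j=\lambda_1^{\,j-1}/(j-1)!>0$, and for $2\le i\le j\le n_1$ the ratio $(e^{\G x})_{ij}/(e^{\G x})_{1j}$ is a constant times $x^{1-i}\to 0$, which is the third relation in that range.

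For a state $j>n_1$ I would use a first-passage decomposition: started in state $1$, the chain must spend an $\textrm{Erlang}(n_1,\lambda_1)$ amount of time walking $1\to 2\to\cdots\to n_1$ before entering the first state of block $n_1+1$, after which — since there are no transitions back to states $\le n_1$ — its evolution is governed by the subgenerator $\widetilde{\G}$ obtained by deleting the first $n_1$ states. Hence $(e^{\G x})_{1j}=\big(g(n_1,\lambda_1,\cdot)\ast\phi_j\big)(x)$ with $\phi_j(t)=(e^{\widetilde{\G}t})_{1,\,j-n_1}\ge 0$. The eigenvalues of $\widetilde{\G}$ are exactly those of the non-dominant FE blocks, whose spectral abscissa is some $r<-\lambda_1$ by the construction of Step 3, so $e^{\lambda_1 t}\phi_j(t)$ is integrable; setting $m_j=\int_0^\infty e^{\lambda_1 t}\phi_j(t)\,dt$, substituting $u=x-t$ and applying dominated convergence to $\int_0^x(x-u)^{n_1-1}e^{\lambda_1 u}\phi_j(u)\,du\sim x^{n_1-1}m_j$ gives $(e^{\G x})_{1j}\sim\frac{\lambda_1^{\,n_1}m_j}{(n_1-1)!}\,x^{n_1-1}e^{-\lambda_1 x}$, the second relation with $C_j=\lambda_1^{\,n_1}m_j/(n_1-1)!$.

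It remains to show $C_j>0$ for $j>n_1$ (i.e. $m_j>0$) and to finish the third relation. For positivity, $\phi_j$ is a genuine transient probability, hence non-negative, and it is not identically zero because in the monocyclic chain every state is reachable from the first state of block $n_1+1$ — walk forward block by block, each FE block being an internally connected chain with strictly positive exit rate since $z<1$ — so $\phi_j$ is continuous and positive on an interval, forcing $m_j>0$. For the third relation with $2\le i\le u$, if the block of $i$ is strictly after that of $j$ then $(e^{\G x})_{ij}\equiv 0$ against a numerator with positive leading coefficient; if $2\le i\le n_1<j$, the same convolution started from state $i$ replaces $\textrm{Erlang}(n_1,\lambda_1)$ by $\textrm{Erlang}(n_1-i+1,\lambda_1)$ and gives $(e^{\G x})_{ij}\sim c_i\,x^{n_1-i}e^{-\lambda_1 x}$ with $n_1-i<n_1-1$, so the ratio is a constant times $x^{1-i}\to 0$; and if $i>n_1$ (then necessarily $j>n_1$), $(e^{\G x})_{ij}$ depends only on non-dominant blocks, hence is $O(x^{d}e^{rx})$ with $r<-\lambda_1$, so dividing by $x^{n_1-1}e^{-\lambda_1 x}$ sends the ratio to $0$.

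I expect the one genuinely non-bookkeeping point to be the strict positivity of the constants $C_j$ for $j>n_1$ (equivalently $m_j>0$): it is the only step that invokes the reachability structure of the monocyclic chain — that each FE block is internally connected and has nonzero exit rate ($z<1$) — rather than merely the location of eigenvalues. Everything else reduces to the nilpotent computation on the first $n_1$ states, a dominated-convergence estimate for a convolution with an Erlang density, and an exponential-order comparison relying on the fact, arranged in Step 3, that every non-dominant FE block has spectral abscissa strictly below $-\lambda_1$. One could alternatively extract the same leading terms from the Jordan form of $\G$, but identifying the generalized eigenvectors for $-\lambda_1$ is messier than the probabilistic argument above.
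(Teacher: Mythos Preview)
Your proof is correct. For the $(1,1)$-block (states $1\le j\le n_1$) you compute $(e^{\G x})_{ij}$ from the nilpotent structure exactly as the paper does, and for the $(2,2)$-block you invoke the same exponential decay with rate strictly below $-\lambda_1$.

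Where you genuinely differ from the paper is the treatment of the off-diagonal block, i.e.\ the entries $(e^{\G x})_{ij}$ with $i\le n_1<j$. The paper proceeds purely algebraically: it shifts $\H=\G+\lambda_1\I$, expands $(\H^n)_{12}=\sum_{k}(\H_{11})^k\H_{12}(\H_{22})^{n-k-1}$ inside the exponential series, resums to obtain a closed form involving $\H_{22}^{-(k+1)}(e^{x\H_{22}}-\sum_{l\le k}(\H_{22}x)^l/l!)$, and then reads off the leading polynomial term $-\frac{x^k}{k!}(\H_{11})^k\H_{12}\H_{22}^{-1}$ in each row. You replace all of this by a first-passage decomposition: write $(e^{\G x})_{1j}$ as the convolution of an $\mathrm{Erlang}(n_1,\lambda_1)$ density with $\phi_j=(e^{\G_{22}\cdot})_{1,j-n_1}$, then extract the asymptotic by dominated convergence. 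Interestingly, both routes land on the same positivity object: your $m_j=\int_0^\infty e^{\lambda_1 t}\phi_j(t)\,dt$ is precisely the $(1,j-n_1)$ entry of $-\H_{22}^{-1}=\int_0^\infty e^{\lambda_1 t}e^{\G_{22}t}\,dt$, which is how the paper establishes $C_j>0$. Your probabilistic argument is shorter and makes the origin of the Erlang polynomial factor transparent; the paper's series manipulation is more self-contained in that it never leaves matrix algebra, and it yields the full matrix $(e^{\H x})_{12}$ rather than just its first row, which slightly streamlines the handling of rows $2\le i\le n_1$ (though your Erlang$(n_1-i+1,\lambda_1)$ variant handles those equally well).
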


Note that the last part of Lemma \ref{lemma:csillag} is stated as
$\frac{\left(e^{\G x}\right)_{ij}}{\left(e^{\G x}\right)_{1j}}\to
0$; in fact, the elements $\left(e^{\G x}\right)_{ij}$ are in a form
similar to $\left(e^{\G x}\right)_{1j}$, just with either the
same exponential term and lower degree polynomial terms,  or lower
exponent (and in this case, the polynomial term does not matter).
The actual exponents and polynomial terms, along with the constants
$C_j$ can be calculated explicitly from the proof of Lemma
\ref{lemma:csillag}, but will not be used.

{ We emphasize that Lemma \ref{lemma:csillag} relies heavily on the monocyclic structure of $\G$, notably on the fact that the upper bi-diagonal elements (elements $(1,2),(2,3),\dots$) of the matrix are strictly positive.}

Now we are ready to prove Theorem \ref{thm:lambdan}.

\subsubsection*{Proof of Theorem \ref{thm:lambdan}.}

We assume that the matrix exponential density function $f_X$ associated with representation $(\gv,\G)$
satisfies $f_X(0)>0$, the dominant eigenvalue and the positive density conditions, and that $\G$ is in monocyclic block structure with the first block corresponding to the dominant eigenvalue $\lambda_1$.

\medskip \noindent
{\em First we show that the first coordinate of $\gv$, denoted by $\gv_1$, is positive.}

If $\gv_1=0$, then the multiplicity of $-\lambda_1$ is $n_1-1$
according to the structure of matrix $\G$ (see \eqref{eq:strG} in the proof of Lemma \ref{lemma:csillag} in subsection \ref{subsection:csillag}), which is in conflict with the fact that the
multiplicity of $-\lambda_1$ in the minimal representation is $n_1$.

$f_X(x)$ is dominated by the first row of $e^{\G x}$ for large values of $x$ and consequently the sign of $f_X(x)$ is determined by $\gv_1$.
The elements of $e^{\G x}$ are transient probabilities of the Markov chain with generator $\G$, consequently they
are non-negative. The elements of the first row of $e^{\G x}$ are strictly positive for $x>0$ because the FE-blocks are connected that way that all
states are reachable from the first state (cf. Figure \ref{fig:fe_representation}).
According to Lemma \ref{lemma:csillag}
$f_X(x)$ is dominated by the first row of $e^{\G x}$ for large values of $t$ and consequently the sign of $f_X(x)$ is determined by $\gv_1$.
More precisely, Lemma \ref{lemma:csillag} implies that
$$0<f_X(x)=\gv (-\G) e^{\G x}\1\sim C \lambda_1 \gv_1 x^{n_1-1}e^{-\lambda_1 x}$$
where $C=\sum_{j\geq n_1} C_j>0$ and $\lambda_1>0$.

\medskip \noindent
{\em Next we show that there exists a $\tau$ such that $\gv e^{\G \tau}$ is positive.}

For the first row of $\gv e^{\G x}$ we have
\begin{align*}
\left(\gv e^{\G x}\right)_{1j} &\sim C_j \gv_1 x^{j-1}e^{-\lambda_1 x} & \textrm{ if } j < n_1, \\
\left(\gv e^{\G x}\right)_{1j} &\sim C_j \gv_1 x^{n_1-1}e^{-\lambda_1 x} & \textrm{ if } n_1 \leq j \leq u,
\end{align*}
from Lemma \ref{lemma:csillag}.
Thus $\gv e^{\G x}$ is positive if $x$ is large enough.
For a constructive procedure to find $\tau$, one can double $x$ starting from $n_1/\lambda_1$ as long as $\min(\gv e^{\G x})<0$.
It is not necessary to find the smallest $x$ for which $\gv e^{\G x}$ is nonnegative.

\medskip \noindent
{\em
After that we show that there exists $\lambda'$ such that $\gv (\I+\frac{\G}{\lambda})^{\lambda \tau}>0$ for $\lambda\geq \lambda'$.}

Apply Lemma \ref{lemma:kozelites} with $\H=\G \tau$ and $n=\lambda \tau$ to get that
$$\left\| \left(\I+\frac{\G}{\lambda}\right)^{\lambda \tau}- e^{\G \tau}\right\|\to 0$$
as $\lambda\to\infty$, and consequently
$$\left\|\gv \left(\I+\frac{\G}{\lambda}\right)^{\lambda \tau}-\gv e^{\G \tau}\right\|\to 0,$$
meaning that $\gv (\I+\frac{\G}{\lambda})^{\lambda \tau}$ is also strictly
positive if $\lambda$ is large enough.
Let $\epsilon_1=\min(\gv e^{\G \tau})$; in accordance with Lemma \ref{lemma:kozelites}, define $\lambda'$ as the solution of
\begin{equation}\label{eq:lambda'}
\|\gv\| \frac{(g\tau)^2e^{g\tau}}{2\lambda\tau}=\epsilon_1.
\end{equation}
where $g=\|\G\|$. Then $\gv (\I+\frac{\G}{\lambda})^{\lambda \tau} > 0$ for $\lambda > \lambda'$,
because the left-hand side is a strictly monotone decreasing function of $\lambda$. Note that $\lambda'$
is explicitly computable from \eqref{eq:lambda'}.

Next we investigate the sign of the rest of the elements of vector $\gv\W$. We apply Lemma \ref{lemma:kozelites} again,
this time for $\H=\frac{k\G}{\lambda}$ and $n=k$ to get
$$\left\|e^{\frac{k\G}{\lambda}}-\left(\I+\frac{\G}{\lambda}\right)^k
\right\| \leq
e^{\frac{k g}{\lambda}} \frac{(kg)^2}{2k\lambda^2}\leq e^{\tau g} \frac{\tau g^2}{2\lambda}
$$
uniformly in $0\leq k \leq \lambda\tau$.

Let $\epsilon_2=\inf_{0\leq x\leq \tau} f_X(x) = \inf_{0\leq x\leq \tau}\gv e^{\G x} (-\G) \1$.
Since $f_X(0)>0$ as a result of Step 3 in Section \ref{sec:step3}, $\epsilon_2$ is strictly positive, due to the positive density condition.
Let $V_k$ be the $k$-th coordinate of $\gv \W$ associated with the Erlang tail in \eqref{Walak}; that is,
$$V_k=\gv  \left(\I+\frac{\G}{\lambda}\right)^k\frac{-\G\1}{\lambda}.$$

Then
\[ \left| \lambda V_k - f_X(\textstyle\frac{k}{\lambda}) \right| =
\left|\gv \left[ e^{\frac{k\G}{\lambda}} - \left(\I+\frac{\G}{\lambda}\right)^k \right] \G \1 \right| \leq
\|\gv\| \left\| e^{\frac{k\G}{\lambda}} - \left(\I+\frac{\G}{\lambda}\right)^k \right\| \|\G\| \|\1\| \leq
\|\gv \| e^{\tau g} \frac{\tau g^2}{2\lambda}g\|\1\|.
\]
Define $\lambda''$ as the solution of
\begin{equation}\label{eq:lambda''}
\|\gv \| e^{\tau g} \frac{\tau g^2}{2\lambda}g\|\1\|= \epsilon_2.
\end{equation}
$\lambda''$ is also explicitly computable. (Note that $\|\1\|=1$, see Appendix \ref{sec:appa}).
For all $\lambda>\lambda''$ we have $V_k>0$ because
$f_X(\textstyle\frac{k}{\lambda})\geq \epsilon_2$ and the difference between
$\lambda V_k$ and $f_X(\textstyle\frac{i}{\lambda})$ is less than $\epsilon_2$.

Putting these together, we get that for $\tau$ and $\lambda=\max(\lambda',\lambda'')$ both
parts of the vector $\gv\W$, that is,
$\gv  \left(\I+\frac{\G}{\lambda}\right)^n $ and $  \gv  \left(\I+\frac{\G}{\lambda}\right)^{k}\frac{-\G\1}{\lambda}$
for $k=0,1,\ldots,n-1$,
are positive where $n=\lceil\tau \lambda \rceil$ and the obtained representation is indeed Markovian.

\subsection{Step 5: correction related to Step 2}

If Step 2 was applied, $(\bv,\B)$ is actually a Markovian representation for $f_Y$; Lemma \ref{lemma:composition} ensures that
\begin{align*}
\bv'&=\{1,0,0,\ldots, 0\}\\
\B'&=\left(
\begin{array}{cccc}
 -\mu & \mu    &        & \\
      & \ddots & \ddots &   \\
      &        & -\mu   & \mu \bv \\
      &        &        & \B  \\
\end{array}
\right)
\end{align*}
is a Markovian representation for $f_X(x)=f_Y(x)\ast g(l,\mu,x)$.

{
\section{Worked example}

Let
\begin{align*}
\av=&\frac{102}{139}\left(\begin{array}{ccccccc}
1 & 1 & -\frac13 & -\frac{2}{3} & -\frac{5}{2} & \frac{12}{17} & \frac{14}{17}
\end{array}
\right),
\\
\A=&\left(
\begin{array}{ccccccc}
-1 & 1 & 0 & 0 & 0 & 0 & 0\\
0 & -1 & 0 & 0 & 0 & 0 & 0\\
0 & 0 & -1 & 4 & 0 & 0 & 0\\
0 & 0 & 1 & -1 & 0 & 0 & 0\\
0 & 0 & 0 & 0 & -4 & 0 & 0\\
0 & 0 & 0 & 0 & 0 & -5 & 3\\
0 & 0 & 0 & 0 & 0 & -3 & -5\\
\end{array}
\right),
\end{align*}
then
$$f(x)=- \av \A e^{\A x} \1=\frac{102}{139}\left(xe^{-x}+e^{-x}+e^{-3x}-10e^{-4x}+e^{-5x}\left(8\cos(3x)+4\sin(3x)\right)\right).$$
The eigenvalues of $A$ are $-1$ (with multiplicity 2), $-3, -4, -5+3i, -5-3i$ and $1$. The eigenvalue 1 is redundant: the corresponding
right-eigenvector is orthogonal to $\av$, thus it does not appear in the pdf. It is eliminated during Step 1.

After Step 1, a minimal representation is obtained:
\begin{align*}
\av_1=&\frac{102}{139}\left(\begin{array}{cccccc}
1 & 1 & \frac13 & -\frac{5}{2} & \frac{13+\ii}{17} & \frac{13-\ii}{17}
\end{array}
\right),
\\
\A_1=&\left(
\begin{array}{cccccc}
-1 & 1 & 0 & 0 & 0 & 0\\
0 & -1 & 0 & 0 & 0 & 0\\
0 & 0 & -3 & 0 & 0 & 0\\
0 & 0 & 0 & -4 & 0 & 0\\
0 & 0 & 0 & 0 & -5+3\ii & 0\\
0 & 0 & 0 & 0 & 0 & -5-3\ii\\
\end{array}
\right).
\end{align*}

Since $f(0)=0$, Step 2 needs to be applied.
\begin{align*}
f(0)=0\qquad f'(0)=7>0,
\end{align*}
so the value of $k$ in Lemma \ref{lemma:fnulla} is $k=1$. Setting $\mu=10$, the transformed pdf after Step 2 (borrowing the notation $f_Y$ from Lemma \ref{lemma:fnulla}) is
$$f_Y(x)=\frac{102}{139}\left(
\frac{9}{10}xe^{-x}+e^{-x}+
\frac{7}{10}e^{-3x}-6e^{-4x}+
\frac{13+\ii}{5}e^{(-5+3\ii)x}+
\frac{13-\ii}{5}e^{(-5-3\ii)x}
\right)$$
and the corresponding representation for $f_Y$ is
\begin{align*}
\av_2=&\frac{102}{139}\left(\begin{array}{cccccc}
\frac{9}{10} & 1 & \frac{2}{5} & -\frac{7}{15} & \frac{23}{340} & \frac{7}{340}
\end{array}
\right),
\\
\A_2=&\left(
\begin{array}{cccccc}
-1 & 1 & 0 & 0 & 0 & 0\\
0 & -1 & 0 & 0 & 0 & 0\\
0 & 0 & -3 & 0 & 0 & 0\\
0 & 0 & 0 & -4 & 0 & 0\\
0 & 0 & 0 & 0 & -5+3\ii & 0\\
0 & 0 & 0 & 0 & 0 & -5-3\ii\\
\end{array}
\right).
\end{align*}

From now on, we work with this representation. In Step 3, the eigenvalue pair $5\pm 3\ii$ is represented by a feedback-Erlang block. The order of this pair is $b=4$, and the corresponding FE-block is
\begin{align*}
&\left(
\begin{array}{cccc}
-5 & 5 & 0 & 0\\
0 & -5 & 5 & 0\\
0 & 0 & -5 & 5\\
\frac{81}{125} & 0 & 0 & -5\\
\end{array}
\right).
\end{align*}
Step 3 results in the representation
\begin{align*}
\gv=&\frac{102}{139}\left(\begin{array}{cccccccc}
\frac{315}{2176} &
\frac{10733}{21760} &
\frac{6641}{32640} &
\frac{8399}{21760} &
\frac{147}{680} &
-\frac{67}{272} &
-\frac{45}{1088} &
\frac{225}{1088}
\end{array}
\right),
\\
\G=&\left(
\begin{array}{cccccccc}
-1 & 1 & 0 & 0 & 0 & 0 & 0 & 0\\
0 & -1 & 1 & 0 & 0 & 0 & 0 & 0\\
0 & 0 & -3 & 3 & 0 & 0 & 0 & 0\\
0 & 0 & 0 & -4 & 4 & 0 & 0 & 0\\
0 & 0 & 0 & 0 & -5 & 5 & 0 & 0\\
0 & 0 & 0 & 0 & 0 & -5 & 5 & 0\\
0 & 0 & 0 & 0 & 0 & 0 & -5 & 5\\
0 & 0 & 0 & 0 & \frac{81}{125} & 0 & 0 & -5\\
\end{array}
\right).
\end{align*}

Since $\gv$ still contains negative elements, Step 4 needs to be applied.

Following the algorithm in the proof of Theorem \ref{thm:lambdan}, we obtain the following values:
\begin{itemize}
\item $\tau=0.5$ (from $\gv e^{\G\tau}>0$),
\item $g=\|G\|=10$,
\item $\|\gv\|<1.5$,
\item $\epsilon_1>0.05$ (for $\tau=0.5$),
\item $\lambda'=112000$ from \eqref{eq:lambda'},
\item $\epsilon>0.069$, and thus $\lambda''=806600$ from \eqref{eq:lambda''}.
\end{itemize}
This means that applying Step 4 with $\lambda=806600$ and $n=\tau\lambda=403300$ we obtain a Markovian representation for $f_Y$ in the form
of \eqref{eq:B}.

Finally, Step 5 applies, so by Lemma \ref{lemma:composition} with $\mu=10$ and $k=1$, we obtain a Markovian representation for the
original ME($\av,\A$). The representation is of order $403309$. Note that the order of this representation is \emph{very} far
from minimal, but we do not pursue a minimal value.
}

\section{Conclusion}

We have proposed a constructive proof for O'Cinneide's characterization theorem \cite{OCIN90} along with an algorithm that
always succeeds in finding a Markovian representation. The algorithm and the proof are divided into a few distinct steps,
connecting some of the modern results in the field as well as introducing some original ideas using elementary function theory and matrix analysis.

\section*{Acknowledgement}
I. Horv\'ath was supported by the Hungarian National Science
Foundation, OTKA, grant K100473 and by the
Hungarian Government through the project
T\'AMOP-4.2.2.B-10/1--2010-0009;
and M. Telek was supported by OTKA grant K101150
and T\'AMOP-4.2.2C-11/1/KONV-2012-0001.

\bibliographystyle{plain}
\bibliography{procedure}

\appendix
\section{Vector and matrix norms}
\label{sec:appa}

We need some auxiliary facts about vector and matrix norms. First we define vector norms. Let $\v$ be a vector of size $n$.

\begin{definition}
The 1-norm and $\infty$-norm of $\v$ are
$$\|\v\|_1=\sum_{i=1}^n |v_i|,\qquad \|\v\|_\infty=\max_{1\leq i\leq n} |v_i|.$$
\end{definition}

\begin{lemma}
\begin{enumerate}
\item[a)] $\|.\|_1$ and $\|.\|_\infty$ are equivalent, i.e,
$$\|\v\|_\infty\leq\|\v\|_1\leq n ~ \|\v\|_\infty.$$
\item[b)] If $\v$ is a row vector and $\w$ a column vector, then
$$|\v\w|\leq \|\v\|_1 ~ \|\w\|_\infty,\qquad |\v\w|\leq \|\v\|_\infty ~ \|\w\|_1.$$
\end{enumerate}
\end{lemma}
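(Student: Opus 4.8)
The plan is to prove both parts directly from the definitions of $\|\cdot\|_1$ and $\|\cdot\|_\infty$, using nothing beyond the triangle inequality for finite sums; there is no genuine obstacle, so the ``work'' is just organizing which norm bounds which factor.

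For part a) I would argue in two steps. First, for each index $i$ the quantity $|v_i|$ is one of the nonnegative summands of $\sum_{j=1}^n |v_j|$, hence $|v_i| \le \sum_{j=1}^n |v_j| = \|\v\|_1$; taking the maximum over $i$ gives $\|\v\|_\infty \le \|\v\|_1$. Second, bounding every summand by the largest one, $\|\v\|_1 = \sum_{i=1}^n |v_i| \le \sum_{i=1}^n \|\v\|_\infty = n\,\|\v\|_\infty$. The two inequalities together are exactly the claimed equivalence.

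For part b) I would expand the scalar product as $\v\w = \sum_{i=1}^n v_i w_i$ and apply the triangle inequality to obtain $|\v\w| \le \sum_{i=1}^n |v_i|\,|w_i|$. Bounding $|w_i| \le \|\w\|_\infty$ in each term then yields $|\v\w| \le \big(\sum_{i=1}^n |v_i|\big)\|\w\|_\infty = \|\v\|_1\,\|\w\|_\infty$; symmetrically, bounding $|v_i| \le \|\v\|_\infty$ first gives $|\v\w| \le \|\v\|_\infty\,\|\w\|_1$. This is simply the $(1,\infty)$ instance of H\"older's inequality. The only point worth a word of care is the termwise step $|v_i w_i| = |v_i|\,|w_i|$, which holds whether the entries are taken in $\mathbb{R}$ (as used throughout the paper) or in $\mathbb{C}$, so the same argument applies verbatim in both settings.
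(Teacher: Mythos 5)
Your proof is correct and is the standard direct argument (termwise bounds plus the triangle inequality); the paper itself states this lemma in Appendix A without proof, treating it as a known fact about the $1$- and $\infty$-norms, so your argument supplies exactly the routine verification the authors omitted.
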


The fact that they are equivalent means that they define the same
topology,  so convergence to $0$ is equivalent in either norm. For
convenience, we will stick to using $\|.\|_1$ for row vectors and
$\|.\|_\infty$ for column vectors.

We also need a matrix norm.
\begin{definition}
The $\infty$-norm of $\A$ is
$$\|\A\|_\infty=\max_{1\leq i\leq n} \sum_{j=1}^n |A_{ij}|$$
\end{definition}

This is a submultiplicative norm:
$$\|\A\B\|_\infty\leq \|\A\|_\infty ~ \|\B\|_\infty.$$

Actually, the above matrix norm is the induced matrix norm of the
vector  norm $\|.\|_\infty$ when multiplying a column vector with a
matrix from the left, and the induced matrix norm of the vector norm $\|.\|_1$ when multiplying a row vector with a matrix from the right.
This means it works nicely with the previous vector norms.
\begin{lemma}
Let $\v$ be a row vector and $\w$ a column vector of size $n$ and $\A$ be an $n\times n$ matrix. Then
$$\|\v\A\|_1\leq \|v\|_1 ~ \|\A\|_\infty,\quad
\|\A\w\|_\infty\leq \|\A\|_\infty ~ \|\w\|_\infty,\quad
|\v\A\w|\leq \|\v\|_1 ~  \|\A\|_\infty ~  \|\w\|_\infty.$$
\end{lemma}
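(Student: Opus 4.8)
The plan is to establish each of the three inequalities by a direct coordinate expansion, using the triangle inequality and then rearranging the resulting double sum so that the row-sum defining $\|\A\|_\infty$ appears explicitly. First I would prove $\|\v\A\|_1\leq\|\v\|_1\,\|\A\|_\infty$. Writing $(\v\A)_j=\sum_i v_iA_{ij}$, the triangle inequality gives $\|\v\A\|_1=\sum_j\bigl|\sum_i v_iA_{ij}\bigr|\leq\sum_j\sum_i|v_i|\,|A_{ij}|=\sum_i|v_i|\sum_j|A_{ij}|$; bounding each inner sum $\sum_j|A_{ij}|$ by its maximum over $i$, which is exactly $\|\A\|_\infty$, yields $\sum_i|v_i|\,\|\A\|_\infty=\|\v\|_1\,\|\A\|_\infty$.

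Next I would prove $\|\A\w\|_\infty\leq\|\A\|_\infty\,\|\w\|_\infty$. For each index $i$ we have $|(\A\w)_i|=\bigl|\sum_j A_{ij}w_j\bigr|\leq\sum_j|A_{ij}|\,|w_j|\leq\|\w\|_\infty\sum_j|A_{ij}|\leq\|\w\|_\infty\,\|\A\|_\infty$, and taking the maximum over $i$ gives the claim. Finally, the third inequality follows by combining either of the first two with part b) of the preceding lemma: associating the product as $(\v\A)\w$ gives $|\v\A\w|\leq\|\v\A\|_1\,\|\w\|_\infty\leq\|\v\|_1\,\|\A\|_\infty\,\|\w\|_\infty$, and associating it instead as $\v(\A\w)$ gives the same bound from the other side.

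There is no genuine obstacle here; the only points requiring care are the bookkeeping when swapping the order of summation $\sum_i$ and $\sum_j$, and remembering that $\|\A\|_\infty$ is defined as the \emph{maximum row sum} — this is precisely what makes the row-vector estimate come out with $\|\cdot\|_1$ on the left and the column-vector estimate come out with $\|\cdot\|_\infty$ on the right. Alternatively, one may simply observe, as already noted in the text, that this lemma merely restates the fact that $\|\cdot\|_\infty$ is the operator norm induced by $\|\cdot\|_1$ under right multiplication and by $\|\cdot\|_\infty$ under left multiplication, so that nothing beyond the definitions is actually needed.
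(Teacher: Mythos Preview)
Your argument is correct. The paper does not actually give a proof of this lemma: it states the three inequalities and, in the preceding sentence, simply remarks that $\|\cdot\|_\infty$ is the operator norm induced by the vector $\infty$-norm (for left multiplication of column vectors) and by the vector $1$-norm (for right multiplication of row vectors), taking the inequalities as immediate consequences. Your direct coordinate computation supplies exactly the verification the paper leaves implicit, and your closing observation about induced norms is precisely the paper's own stance.
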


\section{Proofs for the necessary direction}
\label{sec:appb}

\begin{definition}
\label{def:redundant}
The Markovian ($\av$, $\A$) representation of PH($\av$, $\A$) is {\em redundant} if it contains at least one state
which cannot be visited by the Markov chain with initial distribution $\av$ and generator $\A$. Otherwise ($\av$, $\A$) is {\em non-redundant}.
\end{definition}

If the representation ($\av$, $\A$) is redundant then it is easy to identify and eliminate the redundant states.
Consider the vector $-\av\A^{-1}$. The stochastic interpretation of its $i$th coordinate is the mean time spent in state $i$ before absorption.
If the $i$th element of vector $-\av\A^{-1}$ is zero then state $i$ is redundant and the associated elements can be deleted from vector
$\av$ and matrix  $\A$ without modifying the distribution of time till absorption.

\begin{lemma}
\label{lemma:pdc}
If $X$ is PH($\av$,$\A$) distributed and non-redundant, then the positive density condition holds, that is,
$$f_X(x)>0\qquad \forall x>0.$$
\end{lemma}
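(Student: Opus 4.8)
The plan is to use the stochastic interpretation of the phase-type representation directly. Since $(\av,\A)$ is non-redundant, every state $i\in\{1,\dots,n\}$ is visited by the underlying CTMC with positive probability; equivalently, writing $p_i(x)$ for the probability that the chain is in state $i$ at time $x$ (started from $\av$ at time $0$), we have that for each $i$ there is \emph{some} time $x_i\ge 0$ with $p_i(x_i)>0$. Recall $p(x)=\av e^{\A x}$ is a row vector and the density is $f_X(x)=-\av e^{\A x}\A\ones = \av e^{\A x}\,\sv$, where $\sv=-\A\ones\ge 0$ is the vector of exit rates. Since $\A$ is non-singular and $\av e^{\A x}\ne 0$ for all $x$, the exit vector $\sv$ is not identically zero, and in fact $\sv$ has a positive entry in at least one state from which, by non-redundancy, absorption is reachable.

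First I would establish the key positivity fact about $e^{\A x}$ itself: for $x>0$, the $(i,j)$ entry of $e^{\A x}$ is strictly positive whenever state $j$ is reachable from state $i$ in the transition graph of $\A$ (directed edges $i\to j$ for $A_{ij}>0$). This is a standard consequence of $e^{\A x}=\sum_k \frac{x^k \A^k}{k!}$ together with $\A = -\!\operatorname{diag}(\dots) + (\text{nonneg off-diagonal})$: shifting by a large multiple of the identity, $\A + cI$ is entrywise nonnegative, its powers have positive $(i,j)$ entry once a directed path of the appropriate length exists, and $e^{\A x}=e^{-cx}e^{(\A+cI)x}$ inherits strict positivity of those entries for every $x>0$. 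Then I would combine this with non-redundancy: for a fixed $x>0$, $p_j(x)=(\av e^{\A x})_j=\sum_i \alpha_i (e^{\A x})_{ij}$; each term is $\ge 0$, and since some starting state $i$ with $\alpha_i>0$ reaches $j$, at least one term is strictly positive, so $p_j(x)>0$ for \emph{all} $j$ and all $x>0$. (Here I use that non-redundancy exactly says: for every $j$ there is $i$ with $\alpha_i>0$ and a directed path $i\rightsquigarrow j$.)

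Finally, $f_X(x)=\sum_j p_j(x)\, s_j$ with $s_j\ge 0$. To conclude $f_X(x)>0$ for $x>0$ it suffices that $s_j>0$ for at least one $j$, which holds because $\sv=-\A\ones$ and $\A$ non-singular forces $\sv\ne 0$ (if $\A\ones=0$ then $\ones$ would be a null vector of the non-singular $\A$, a contradiction). Since $p_j(x)>0$ for that $j$ and all $x>0$, we get $f_X(x)\ge p_j(x)s_j>0$ for all $x>0$, which is the claim.

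The main obstacle is the graph-theoretic bookkeeping tying non-redundancy to reachability: one must argue carefully that a state being "visited with positive probability" is equivalent to there being a directed path from some $\av$-charged state to it, and that this is precisely what makes the corresponding term $\alpha_i (e^{\A x})_{ij}$ nonzero. The analytic input — strict positivity of the reachable entries of $e^{\A x}$ for every $x>0$ via the shift trick — is routine, and the reduction $f_X=\sum_j p_j s_j$ with $\sv\neq 0$ is immediate; everything hinges on getting the reachability/redundancy correspondence exactly right.
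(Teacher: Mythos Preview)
Your proof is correct and takes essentially the same approach as the paper's: decompose $f_X(x)=\sum_j p_j(x)\,s_j$ with $p_j(x)=(\av e^{\A x})_j$ and $s_j=(-\A\ones)_j$, use non-redundancy to get $p_j(x)>0$ for every $j$ and every $x>0$, observe that non-singularity of $\A$ forces some $s_j>0$, and conclude. You supply more analytic detail (the shift trick for strict positivity of the reachable entries of $e^{\A x}$) where the paper simply invokes the stochastic interpretation of the Markov chain, but the argument is the same.
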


\begin{proof}
If $X$ is PH($\av$,$\A$) distributed  and non-redundant
then there is a path from every state with positive initial probability to the absorbing state and every state belongs to one of those paths.
Consequently, the Markov chain is in state $j$ at time $x$ with positive probability, for any time $x>0$ and for any state $j$.
Let state $i$ be a transient state from where the absorption rate $g_i$ is positive. Then
\[ f_X(x)= \av  e^{\A x} (-\A)\1 = \sum_{j=1}^n \mathrm{Pr}(Z(x)=j) g_j \geq Pr(Z(x)=i) g_i > 0, \]
where $Z(x)$ denotes the underlying Markov chain.
\end{proof}

\begin{lemma}
\label{lemma:dec}
If $X$ is PH($\av$,$\A$) distributed and non-redundant,
then the dominant eigenvalue condition holds.
\end{lemma}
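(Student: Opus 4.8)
The plan is to argue directly from the stochastic interpretation of a non-redundant phase-type representation, mirroring the structure of the proof of Lemma \ref{lemma:pdc}. Recall that the dominant eigenvalue condition requires that \emph{some} minimal representation of the ME distribution have a single eigenvalue of maximal real part. Since (by property P5) all minimal representations share the same Jordan block structure, and the eigenvalues appearing in a minimal representation are exactly those eigenvalues of $\A$ whose corresponding Jordan blocks actually contribute to $f_X$ (properties P1--P4), it suffices to show that among those contributing eigenvalues there is a unique one of maximal real part, and that it is real.

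First I would set up the notation: let $-\lambda_1$ be an eigenvalue of $\A$ of maximal real part among those that contribute to $f_X$, and write the asymptotic form $f_X(x)\sim \sum_k p_k(x) e^{-\mu_k x}$ where the $-\mu_k$ range over the contributing eigenvalues with real part equal to $\max_j \mathrm{Re}(-\lambda_j)=-\mathrm{Re}(\lambda_1)$, and $p_k$ are the corresponding (nonzero) polynomial coefficient functions. The key observation is that $\A$ is the generator of a transient CTMC, so $\A$ is a (subgenerator) Metzler matrix with nonpositive row sums; by Perron--Frobenius theory for such matrices, the eigenvalue of maximal real part is real (it equals the spectral abscissa, which for a Metzler matrix is attained by a real eigenvalue). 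Hence $-\lambda_1$ is real. Next I would show uniqueness: if there were a complex conjugate pair $-a\pm\ii b$ with $a=\mathrm{Re}(\lambda_1)$ and $b\neq 0$ also contributing, then for large $x$ the density would behave like a real polynomial-times-$e^{-\lambda_1 x}$ term plus an oscillating term of the same exponential order $x^{m}e^{-ax}(\text{const}\cdot\cos(bx+\phi))$. If the oscillating term has strictly higher polynomial degree, or equal degree but the real term has smaller or comparable magnitude, then $f_X$ takes negative values for arbitrarily large $x$, contradicting nonnegativity of the density (which holds since $X$ is a genuine distribution). The remaining case — oscillating term of strictly lower polynomial degree than the real term — is where I would have to think a bit more carefully: here the complex pair is dominated asymptotically and does \emph{not} force negativity at infinity, so it is not immediately excluded by tail behavior alone.

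The main obstacle, then, is ruling out a subdominant complex pair whose real part equals that of the dominant real eigenvalue but whose polynomial degree is strictly smaller. I would handle this by appealing again to the Metzler/Perron structure rather than to tail asymptotics: for an irreducible (after restricting to the non-redundant, actually-visited states) subgenerator $\A$, the Perron eigenvalue $-\lambda_1$ is simple and is strictly larger in real part than every other eigenvalue of that irreducible block — this is the strong form of Perron--Frobenius. Non-redundancy guarantees every state is visited, but the chain need not be irreducible; however, one can decompose the state space into communicating classes, and the eigenvalues of $\A$ are the union of the eigenvalues of the diagonal blocks corresponding to these classes. The maximal real part is attained by the Perron root of the ``bottleneck'' class(es), and if two distinct classes both attain it, both Perron roots are real, and two \emph{equal} real dominant eigenvalues would merge into a Jordan block of size at least the number of such classes in the minimal representation — which is consistent with DEC (DEC allows high multiplicity of the dominant eigenvalue as long as it is the unique eigenvalue of maximal real part). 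So the only thing that must be excluded is a class whose Perron root is \emph{complex} with that maximal real part, which cannot happen since Perron roots of Metzler matrices are real, or a complex eigenvalue of some class that is non-Perron but still has maximal real part, which is excluded by the strict Perron inequality within each class. Assembling these observations — real dominant eigenvalue from Perron, strict domination of non-Perron eigenvalues within each class, and the reduction to diagonal blocks via communicating classes — yields that the set of contributing eigenvalues of maximal real part consists of a single real value, which is exactly the dominant eigenvalue condition.
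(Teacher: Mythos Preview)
Your high-level structure matches the paper's: Perron--Frobenius applied to the Metzler matrix $\A$ (after shifting), reduction to communicating classes in the reducible case, and the strict Perron inequality within each irreducible block to rule out complex eigenvalues of maximal real part. That part is fine and is exactly what the paper does.

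The gap is in the step you gloss over. You define $-\lambda_1$ as the eigenvalue of maximal real part \emph{among those that contribute to $f_X$}, but then invoke Perron--Frobenius on $\A$ itself to conclude $-\lambda_1$ is real. That is a non-sequitur: Perron--Frobenius tells you the spectral abscissa of $\A$ is real, not that the spectral abscissa of the \emph{minimal} representation is real. If the Perron root of $\A$ happened not to contribute to $f_X$, the maximal contributing eigenvalue would be a strictly sub-dominant eigenvalue of some block, and nothing you have said excludes it from being complex. Your class-by-class strict Perron inequality has the same defect: it controls the spectrum of $\A$, not the subset that survives to the minimal representation. Similarly, the sentence ``two equal real dominant eigenvalues would merge into a Jordan block of size at least the number of such classes in the minimal representation'' is precisely the nontrivial claim that needs proof; you assert it but do not argue it.

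This is exactly the point where the paper does real work. In the irreducible case it uses the strict positivity of the Perron eigenvectors together with non-redundancy to get $\av\v_1>0$ and $\u_1\1>0$, so the $e^{-\lambda_1 x}$ term has a nonzero coefficient. In the reducible case the paper performs a block-diagonal Jordan change of basis $\Q$, extracts the submatrix $\B$ of all diagonal occurrences of $-\lambda_1$, observes that $\lambda_1\I+\B$ is nilpotent with \emph{nonnegative} off-diagonal entries (inherited from the nonnegativity of cross-class transitions and the positivity of each block's Perron eigenvectors), and that the corresponding pieces of $\av\Q$ and $\Q^{-1}\1$ are strictly positive; hence the highest-degree polynomial coefficient in the $e^{-\lambda_1 x}$ contribution is strictly positive. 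That is what forces $-\lambda_1$ to appear in the minimal representation with full multiplicity. You need this argument (or an equivalent one) to close the gap between ``Perron root of $\A$ is real and strictly dominant'' and ``the dominant eigenvalue of the minimal representation is real and strictly dominant.''
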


Before proving Lemma \ref{lemma:dec}, we elaborate on Definition \ref{def:minimal}. Let $ME(\gv,\G)$ be a minimal representation for $X$.
Consider its pdf using the Jordan-decomposition of $\G$ ($\G=\P\J \P^{-1}$)
$$f_X(x)=-\gv \P\J e^{\J x}\P^{-1}\1 =\sum_{i=1}^l-\gv \P_i\J_i e^{\J_ix}\P_i'\1,
$$
where $\J_i$ denotes the Jordan-block corresponding to the eigenvalue $-\lambda_i$ and $\P_i$ denotes the submatrix of $\P$ containing
only the columns corresponding to $\J_i$. $\P_i'$ denotes the submatrix of $\P^{-1}$ that contains only the rows corresponding to $\J_i$
(thus $\P_i$ is of size $n\times n_i$, where $n_i$ is the multiplicity of $-\lambda_i$ and $n$ is the size of $\G$,
and $\P_i'$ is of size $n_i\times n$). In $\P_i$, the first column of each block is the (unique, up to a constant factor) right eigenvector $\v_i$
corresponding to that eigenvalue and the other columns are generalized eigenvectors. Similarly in $\P_i'$, the last row of each
block is the (unique, up to a constant factor)
left eigenvector $\u_i$ corresponding to that eigenvalue and the rest of the rows are generalized eigenvectors.
If $i\neq j$, then $\P_i' \P_j=\mx{0}$.

The dominant term of $e^{\J_ix}$ is equal to $\frac{x^{n_i-1}e^{-\lambda_i x}}{(n_i-1)!}$ (where $n_i$ denotes the size of $\J_i$),
and it is situated in the upper right corner. Within $-\gv \P_i\J_ie^{\J_ix}\P_i'\1$ this dominant term is obtained exactly when taking
$$-\gv \v_i \J_ie^{\J_ix}\u_i \1 = (\gv \v_i)\lambda_i\frac{x^{n_i-1}e^{-\lambda_i x}}{(n_i-1)!} (\u_i \1).$$
If any of the coefficients $(\gv \v_i)$ and $(\u_i \1)$ is 0, this term would vanish. Properties P3 and P4 ensure that this is not the case,
in other words, all eigenvalues contribute to the pdf with maximal multiplicity (that is, Property P2).

This allows us to prove the DEC for any (possibly non-minimal) Markovian representation $(\av,\mx{A})$ by proving
that there exists a real eigenvalue of $\mx{A}$ that is strictly greater than the real part of all other eigenvalues AND this eigenvalue
contributes to the pdf with maximal multiplicity.

The proof of Lemma \ref{lemma:dec} is based essentially on the Perron--Frobenius lemma.  We
begin by citing the Perron--Frobenius lemma along with a necessary
definition, see for example\ \cite{Meye04}.

\begin{definition}
An $n\times n$ matrix $\mx{M}$ is \emph{reducible} if there exists a nontrivial partition $I\cup J$ of $\{1,2,\dots,n\}$ such that
$$\M_{ij}=0\qquad \forall i\in I, j\in J.$$

Otherwise, $\M$ is \emph{irreducible}.
\end{definition}

In case $\mx{M}$ is the transient generator of a PH distribution, then irreducibility means that each state can be reached from any
other state before absorption, in this case we say that $\mx{M}$ has a single communicating class.
If the Markov chain defined by $\mx{M}$ has multiple
communicating classes, they correspond to a partition of the states as in the above definition.

\begin{theorem}[Perron--Frobenius]
\label{thm:pf} If the irreducible matrix $\M$ has nonnegative
elements, then there exists a positive eigenvalue $\nu_1$ of
$\mx{M}$ such that
\begin{itemize}
\item $\nu_1$ has multiplicity $1$,
\item $\nu_1\geq |\nu_i| \,\forall i$ where $v_i$ denote the eigenvalues of $\M$, and
\item the corresponding right-eigenvector $\v_1$ is strictly positive (note that $\v_1$ is unique up to a constant factor;
it can be chosen such that $\v_1$ is strictly positive).
\end{itemize}
\end{theorem}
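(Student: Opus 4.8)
The plan is to derive all three assertions from the Collatz--Wielandt description of the Perron root, namely $\nu_1=\sup_{\v\ge\0,\ \v\neq\0} r(\v)$ with $r(\v)=\min_{i:\,v_i>0}(\M\v)_i/v_i$ (all inequalities between vectors are entrywise), the one genuine difficulty being that $r$ is not continuous on the probability simplex. So I would first record the combinatorial fact that, for an irreducible nonnegative $n\times n$ matrix $\M$, the matrix $\mx{S}=(\I+\M)^{n-1}$ is \emph{strictly} positive: its $(i,j)$ entry is a nonnegative combination of products $M_{ik_1}M_{k_1k_2}\cdots M_{k_{\ell-1}j}$ over walks of length $\ell\le n-1$ from $i$ to $j$, and irreducibility guarantees, for every ordered pair $(i,j)$, a walk of length $<n$ with all factors positive, while the diagonal is covered by the $\I$ term.

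Next I would run the standard argument. Since $r$ is scale invariant, $\nu_1$ equals the supremum of $r$ over the simplex $\Delta$; and from $\M(\I+\M)\v=(\I+\M)\M\v\ge(\I+\M)\bigl(r(\v)\v\bigr)=r(\v)(\I+\M)\v$ and the monotonicity of $\I+\M$ we get $r\bigl((\I+\M)\v\bigr)\ge r(\v)$, hence $\nu_1=\sup_{\v\in\mx{S}\Delta}r(\v)$. On the compact set $\mx{S}\Delta$ every vector is strictly positive and $r$ is continuous, so the supremum is attained at some $\v_1>\0$. If $\M\v_1\neq\nu_1\v_1$, then $\M\v_1-\nu_1\v_1\ge\0$ is nonzero, so $\mx{S}(\M\v_1-\nu_1\v_1)>\0$, i.e.\ $\M\w>\nu_1\w$ for $\w=\mx{S}\v_1>\0$, contradicting $\nu_1=\sup r\ge r(\w)>\nu_1$; thus $\M\v_1=\nu_1\v_1$ with $\v_1>\0$, and $\nu_1>0$ because (for $n\ge2$) irreducibility forces a positive off-diagonal entry in each row, so $(\M\v_1)_i>0$ for all $i$.

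For the dominance bound, if $\M z=\nu z$ with a complex $z\neq\0$, then taking entrywise moduli and using $M_{ij}\ge0$ gives $|\nu|\,|z_i|=|(\M z)_i|\le\sum_j M_{ij}|z_j|=(\M|z|)_i$, so $\M|z|\ge|\nu|\,|z|$ and $r(|z|)\ge|\nu|$, whence $|\nu|\le\nu_1$. For multiplicity one I would split into a geometric and an algebraic part. Geometric: a real $\nu_1$-eigenvector $\v'$ not proportional to $\v_1$ gives, with $c=\min_i v'_i/(\v_1)_i$, a vector $\w=\v'-c\v_1\ge\0$ which is again a $\nu_1$-eigenvector and has a zero coordinate; but then $\mx{S}\w=(1+\nu_1)^{n-1}\w$ with $\mx{S}\w>\0$ unless $\w=\0$, so $\w=\0$ and $\v'=c\v_1$ (a complex eigenvector splits into real and imaginary $\nu_1$-eigenvectors, so this covers the complex case). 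Algebraic: applying the same theory to $\M^{\!\top}$, which is irreducible, nonnegative, with the same spectral data, produces a strictly positive left eigenvector with $\u\M=\nu_1\u$; were $\nu_1$ attached to a Jordan block of size $\ge2$ there would be $\v$ with $(\M-\nu_1\I)\v=\v_1$, and then $\u\v_1=\u(\M-\nu_1\I)\v=0$, impossible since $\u,\v_1>\0$. Hence the algebraic multiplicity equals the geometric one, namely $1$.

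I expect the main obstacle to be exactly the existence step---forcing the Collatz--Wielandt supremum to be attained by a strictly positive vector rather than somewhere on the boundary of $\Delta$---and the device of composing with $\mx{S}=(\I+\M)^{n-1}$ is precisely what removes it; the dominance and simplicity claims then reduce to routine manipulations of entrywise inequalities, together with the positive left eigenvector obtained by applying the argument to $\M^{\!\top}$.
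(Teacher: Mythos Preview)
Your argument is correct and follows the classical Collatz--Wielandt route: you use the strict positivity of $\mx{S}=(\I+\M)^{n-1}$ to push the variational supremum onto a compact set of strictly positive vectors, extract a maximizer, upgrade it to an eigenvector via the strict-positivity trick, and then dispatch dominance and simplicity with the standard entrywise and left-eigenvector arguments. Each step is sound; the only cosmetic point is that ``$\nu_1>0$'' requires $n\ge 2$ (or $\M\neq\0$), which you note, and which is harmless in the paper's application.

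As for comparison: the paper does not actually prove Theorem~\ref{thm:pf}. It quotes the statement and refers the reader to Theorem~3 in \cite{suzumura} for a short self-contained proof and to Chapter~8 of \cite{Meye04} for a fuller treatment. So you have supplied a proof where the paper only cites one. Your Collatz--Wielandt approach is in the same spirit as the short proofs in those references (Suzumura's proof is essentially this argument), and it has the advantage of being entirely self-contained and elementary, fitting well with the paper's stated goal of keeping the exposition at the level of basic matrix theory.
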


See Theorem 3 in \cite{suzumura} for a short, self-contained proof or Chapter 8 in \cite{Meye04} for a more detailed discussion.
Note that the same conclusion holds for the left-eigenvector $\u_1$ as well. Note that the fact that $\nu_1$ is positive with multiplicity 1
and $\nu_1\geq |\nu_i|$ mean that $\Re(\nu_i)<\nu_1$ for $i\neq 1$.

\subsubsection*{Proof of Lemma \ref{lemma:dec}.}

In case $\A$ has a single communicating class we apply Theorem
\ref{thm:pf} to the matrix $\mx{M}=\A+\omega \I$, where $\omega=\max_i |a_{ii}|$. Given
that the matrix $\A$ is Markovian, $\mx{M}$ is nonnegative with the
same eigenvectors and the eigenvalues shifted by $\omega$.
The dominant eigenvalue $\nu_1$ of $\mx{M}$ corresponds to
the dominant eigenvalue
$-\lambda_1$ of $\A$, that is $\nu_1=-\lambda_1+\omega$ and
the same relation holds for the other eigenvectors. Clearly for $i\neq 1$
$$\Re(\nu_i)<\nu_1 \quad \Longrightarrow \quad \Re(-\lambda_i)<-\lambda_1.$$

If $\mx{A}$ has a single communicating class
then Theorem \ref{thm:pf} guarantees that the multiplicity of $-\lambda_1$ is 1; this means that the unique
dominant term in the pdf is
$(\av \v_1)\lambda_1e^{-\lambda_1 x}(\u_1 \1).$
Strict positivity of $\v_1$ and $\u_1$ ensure $\av \v_1>0$ and $\u_1 \1>0$, so indeed $\lambda_1$ contributes to the
pdf with multiplicity 1.

If $\A$ has several communicating classes, the states can be renumbered such that $\A$ is an upper block triangular matrix,
where each diagonal block corresponds to a communicating class and the blocks above the diagonal correspond to transitions between classes.
The diagonal blocks are denoted by $\B_1,\dots, \B_k$. The eigenvalues of $\A$ are the union of the eigenvalues associated with these diagonal blocks.
Each $\B_i$ is itself the generator of a transient Markov chain, and, since $\B_i$ is also irreducible,
Theorem \ref{thm:pf} can be applied to each of them. It follows that each of these blocks (communicating classes) has its own dominant eigenvalue
such that within that class, the real parts of all other eigenvalues are strictly smaller.
It follows directly that the largest eigenvalue of $\A$ (denoted by $-\lambda_1$) is real and has $-\lambda_1>\Re (-\lambda_i)$
for all $\lambda_i\neq \lambda_1$.

However, as opposed to the single class case, the multiplicity of $-\lambda_1$ may be higher than 1. Also, there may be several
eigenvectors corresponding to $-\lambda_1$. This means that in order to calculate the contribution of $-\lambda_1$ to the pdf,
we need to be slightly more meticulous. The proof is essentially a transformation of the matrix $\A$ to a
form that is similar to the Jordan form (but not the same), while preserving some nonnegativity of $\A$ and $\av$ (where it is important).
We also present a numerical example at the end of this section to demonstrate the steps of the proof.

Let $\Q_i\J_i\Q_i^{-1}=\B_i$
be the Jordan decomposition of $\B_i$. We assume that the first block of $\J_i$ is the single dominant eigenvalue of $\B_i$;
Theorem \ref{thm:pf} thus guarantees that the first column of $\Q_i$, which is the corresponding right eigenvector,
is strictly positive, and the first row of $\Q_i^{-1}$, which is the corresponding left eigenvector, is also strictly positive.
Create the transformation matrix
$$\Q=\left[
\begin{array}{ccccc}
\Q_1 &  0 & 0 & \dots & 0\\
0 & \Q_2 & 0 & \dots & 0\\
\vdots & & & & \vdots \\
0 & & \dots & 0 & \Q_k
\end{array}\right].
$$
Then $\Q^{-1} \A \Q$ is an upper triangular matrix that contains the eigenvalues of $\A$ in its diagonal.
Applying this transformation to the pdf, we get
$$f_X(x)=-\av \A e^{\A x}\1=
-(\av \Q) (\Q^{-1}\A\Q)e^{(\Q^{-1}\A\Q)x} (\Q^{-1} \1).$$
Take all rows and columns of $\Q^{-1}\A\Q$ that have $-\lambda_1$ in the diagonal. Denote this submatrix by $\B$.
The submatrix $\B$ is responsible for the whole contribution of $-\lambda_1$. $\B$ can be calculated as
$$\B=\R \Q^{-1}\A\Q \R^T$$
where $\R$ is a $n_1\times n$ binary matrix (whose elements are either 0 or 1)
where $n_1$ is the multiplicity of the dominant eigenvalue in $\A$ and $n$
is the size of $\A$; row $i$ in $\R$ is equal to the unit vector $\ev_j$ if the $i$-th instance of $-\lambda_1$ in the
diagonal of $\Q^{-1}\A\Q$ is at coordinate $j,j$.
$(\av\Q)$ is strictly positive on the coordinates corresponding to $\B$ since the dominant eigenvector of $\Q_i$ are
strictly positive and the block of $\av$ associated with $\Q_i$ is nonnegative and different from 0
(if it was 0 then PH($\av$,$\A$) would be redundant). Similarly, $(\Q^{-1}\1)$ is strictly positive on
the coordinates corresponding to $\B$.

Finally, we argue that we can identify the dominant term in $e^{\B x}$ and see that it has a positive coefficient.
This is done directly instead of transforming $\B$ to Jordan form.
To this end, note that the offdiagonal elements of $\B$ are nonnegative since $\A$ originally contained nonnegative
elements above the diagonal, which were then multiplied by the strictly positive dominant left and right eigenvectors of each block $\B_i$.

The matrix $\lambda_1\I+\B$ is strictly upper triangular, thus nilpotent; this implies that the series expansion
$$e^{(\lambda_1\I+\B)x}=\sum_{k=0}^\infty \frac{((\lambda_1\I+\B)x)^k}{k!}$$
is actually a finite sum, and $e^{(\lambda_1\I+\B) x}$ is a polynomial of $x$. The dominant term in $e^{\B x}$ is
equal to the last nonzero term of this polynomial, multiplied by $e^{-\lambda_1x}$. The coefficient of this term
is necessarily positive since $(\lambda_1\I+\B)$ and thus powers of $(\lambda_1\I+\B)$ do not have negative elements.

Consequently, we have proved that $\lambda_1$ contributes to the pdf
$$f_X(x)=-\av \A e^{\A x}\1=
-(\av \A) (\Q^{-1}\A\Q)e^{(\Q^{-1}\A\Q)x} (\Q^{-1} \1).$$
with maximal multiplicity and with a positive coefficient, and the DEC holds.


\begin{example}
Let
$$\A=\left[\begin{array}{cccccccccc}
\cline{1-3}
\multicolumn{1}{|c}{-4} & 1 & \multicolumn{1}{c|}{1} & 0 & 0.2 & 0.4 & 0 & 0 & 0 & 0.4\\
\multicolumn{1}{|c}{1} & -2 & \multicolumn{1}{c|}{1} & 0 & 0 & 0 & 0 & 0 & 0 & 0\\
\multicolumn{1}{|c}{2} & 0 & \multicolumn{1}{c|}{-3} & 0 & 0 & 0 & 0 & 0.2 & 0.4 & 0.2\\
\cline{1-5}
0 & 0 & 0 & \multicolumn{1}{|c}{-4} & \multicolumn{1}{c|}{3} & 0.2 & 0.2 & 0 & 0.4 & 0\\
0 & 0 & 0 & \multicolumn{1}{|c}{1} & \multicolumn{1}{c|}{-2} & 0 & 0.2 & 0.2 & 0 & 0.2\\
\cline{4-7}
0 & 0 & 0 & 0 & 0 & \multicolumn{1}{|c}{-2} & \multicolumn{1}{c|}{1} & 0 & 1/5 & 0\\
0 & 0 & 0 & 0 & 0 &  \multicolumn{1}{|c}{1} & \multicolumn{1}{c|}{-2} & 0 & 0 & 0\\
\cline{6-9}
0 & 0 & 0 & 0 & 0 & 0 & 0 & \multicolumn{1}{|c}{-8} & \multicolumn{1}{c|}{2} & 0.6\\
0 & 0 & 0 & 0 & 0 & 0 & 0 & \multicolumn{1}{|c}{6} & \multicolumn{1}{c|}{-7} & 0 \\
\cline{8-10}
0 & 0 & 0 & 0 & 0 & 0 & 0 & 0 & 0 & \multicolumn{1}{|c|}{-1}\\
\cline{10-10}
\end{array}\right].$$
$\A$ has 5 communicating classes: $\B_1$ has size 3 and dominant eigenvalue $-1$, $\B_2$, $\B_3$ and $\B_4$ are of size 2
and their dominant eigenvalues are $-1, -1$ and $-4$ respectively; $\B_5$ is of size 1 with dominant eigenvalue $-1$. Thus $\lambda_1=1$.
$$\Q=\left[\begin{array}{cccccccccc}
\cline{1-3}
\multicolumn{1}{|c}1 & 0 & \multicolumn{1}{c|}{-1} & 0 & 0 & 0 & 0 & 0 & 0 & 0\\
\multicolumn{1}{|c}2 & -1 & \multicolumn{1}{c|}{0} & 0 & 0 & 0 & 0 & 0 & 0 & 0\\
\multicolumn{1}{|c}1 & 1 & \multicolumn{1}{c|}{1} & 0 & 0 & 0 & 0 & 0 & 0 & 0\\
\cline{1-5}
0 & 0 & 0 & \multicolumn{1}{|c}{1} & \multicolumn{1}{c|}{-3} & 0 & 0 & 0 & 0 & 0\\
0 & 0 & 0 & \multicolumn{1}{|c}{1} & \multicolumn{1}{c|}{1} & 0 & 0 & 0 & 0 & 0\\
\cline{4-7}
0 & 0 & 0 & 0 & 0 & \multicolumn{1}{|c}{1} & \multicolumn{1}{c|}{-1} & 0 & 0 & 0\\
0 & 0 & 0 & 0 & 0 & \multicolumn{1}{|c}{1} & \multicolumn{1}{c|}{1} & 0 & 0 & 0\\
\cline{6-9}
0 & 0 & 0 & 0 & 0 & 0 & 0 & \multicolumn{1}{|c}{1} & \multicolumn{1}{c|}{-2} & 0\\
0 & 0 & 0 & 0 & 0 & 0 & 0 & \multicolumn{1}{|c}{3} & \multicolumn{1}{c|}{2} & 0 \\
\cline{8-10}
0 & 0 & 0 & 0 & 0 & 0 & 0 & 0 & 0 & \multicolumn{1}{|c|}{1}\\
\cline{10-10}
\end{array}\right]$$
Notice that in $\Q$, the first column in each block is strictly positive. Even though it is not displayed in this example,
 $\Q$ (and $\Q^{-1}\A\Q$) may contain complex numbers, but only in rows and columns corresponding to non-dominant eigenvalues.
$$\Q^{-1}\A\Q=
\left[\begin{array}{cccccccccc}
\cline{1-1} \cline{4-4} \cline{6-6} \cline{10-10}
\multicolumn{1}{|c|}{-1} & 0 & 0 & \multicolumn{1}{|c|}{0.05} & 0.05 & \multicolumn{1}{|c|}{0.10} & -0.10 & 0.25 & 0.20 & \multicolumn{1}{|c|}{0.15}\\
\cline{1-1} \cline{4-4} \cline{6-6} \cline{10-10}
\multicolumn{1}{c}{0} & -3 & 0 & \multicolumn{1}{c}{0.10} & 0.10 & \multicolumn{1}{c}{0.20} & -0.20 & 0.50 & 0.40 & \multicolumn{1}{c}{0.30}\\
\multicolumn{1}{c}{0} & 0 & -5 & \multicolumn{1}{c}{-0.15} & 0.30 & \multicolumn{1}{c}{-0.15} & -0.30 & 0.25 & 0.20 & \multicolumn{1}{c}{-0.25}\\
\cline{1-1} \cline{4-4} \cline{6-6} \cline{10-10}
\multicolumn{1}{|c|}{0} & 0 & 0 & \multicolumn{1}{|c|}{-1} & 0 & \multicolumn{1}{|c|}{0.25} & 0.15 & 0.35 & 0 & \multicolumn{1}{|c|}{0.15}\\
\cline{1-1} \cline{4-4} \cline{6-6} \cline{10-10}
\multicolumn{1}{c}{0} & 0 & 0 & \multicolumn{1}{c}{0} & -5 & \multicolumn{1}{c}{-0.05} & 0.05 & -0.15 & -0.40 & \multicolumn{1}{c}{0.05}\\
\cline{1-1} \cline{4-4} \cline{6-6} \cline{10-10}
\multicolumn{1}{|c|}{0} & 0 & 0 & \multicolumn{1}{|c|}{0} & 0 & \multicolumn{1}{|c|}{-1} & 0 & 0.20 & 0.30 & \multicolumn{1}{|c|}{0}\\
\cline{1-1} \cline{4-4} \cline{6-6} \cline{10-10}
\multicolumn{1}{c}{0} & 0 & 0 & \multicolumn{1}{c}{0} & 0 & \multicolumn{1}{c}{0} & -3 & -0.20 & -0.30 & \multicolumn{1}{c}{0}\\
\multicolumn{1}{c}{0} & 0 & 0 & \multicolumn{1}{c}{0} & 0 & \multicolumn{1}{c}{0} & 0 & -4 & 0 & \multicolumn{1}{c}{9/35}\\
\multicolumn{1}{c}{0} & 0 & 0 & \multicolumn{1}{c}{0} & 0 & \multicolumn{1}{c}{0} & 0 & 0 & -11 & \multicolumn{1}{c}{-6/35} \\
\cline{1-1} \cline{4-4} \cline{6-6} \cline{10-10}
\multicolumn{1}{|c|}{0} & 0 & 0 & \multicolumn{1}{|c|}{0} & 0 & \multicolumn{1}{|c|}{0} & 0 & 0 & 0 & \multicolumn{1}{|c|}{-1}\\
\cline{1-1} \cline{4-4} \cline{6-6} \cline{10-10}
\end{array}\right].
$$
The rows and columns that include the dominant eigenvalue are marked and so
$$\R=\left[\begin{array}{cccccccccc}
1 & 0 & 0 & 0 & 0 & 0 & 0 & 0 & 0 & 0\\
0 & 0 & 0 & 1 & 0 & 0 & 0 & 0 & 0 & 0\\
0 & 0 & 0 & 0 & 0 & 1 & 0 & 0 & 0 & 0\\
0 & 0 & 0 & 0 & 0 & 0 & 0 & 0 & 0 & 1\\
\end{array}\right],
~~
\B=\R \Q^{-1}\A\Q \R^T=\left[\begin{array}{cccc}
-1 & 0.05 & 0.10 & 0.15\\
0 & -1 & 0.25 & 0.15\\
0 & 0 & -1 & 0\\
0 & 0 & 0 & -1
\end{array}\right].$$
The last nonzero power of the nilpotent matrix $\lambda_1 \I+\B$ is
$$(\lambda_1 \I+\B)^2=
\left[\begin{array}{cccc}
0 & 0 & 0.00125 & 0.0075\\
0 & 0 & 0 & 0\\
0 & 0 & 0 & 0\\
0 & 0 & 0 & 0
\end{array}\right]$$
whose nonzero elements are all positive.

\end{example}

\section{Proofs for the sufficient direction}
\label{sec:appc}

\subsection{Proof of Lemma \ref{lemma:fnulla}.}

The intuitive behavior of the convolution of
the pdf of a non-negative r.v. ($Y$) and the $\textrm{Erlang}(l,\mu)$ pdf
is the following: assume $f_Y(0)>0$; for large values of $\mu$, the
Erlang pdf decays rapidly, so the function $f_Y$ is very close to
$f_X$, except around $0$, since convolution of a pdf $f_Y$ with an $\textrm{Erlang}(l,\mu)$ pdf increases the multiplicity of 0 by $l$. Lemma \ref{lemma:fnulla} utilizes this relation in the opposite direction.
Hence if $f_X$ was positive everywhere except at 0 with multiplicity $l$, then $f_Y$ will be positive at 0 and its positivity for $\mathbb{R}^+$ comes from the small difference from $f_X$. { (Actually, the tail and the main body of $f_Y(x)$ will be examined separately for technical reasons.)}

$f_Y$ can be calculated in the Laplace-transform domain as follows.
The Laplace-transform of the $\textrm{Erlang}(l,\mu)$ pdf is
$$f^*_{k,\mu}(s)=\left(\frac{\mu}{s+\mu}\right)^l.$$

Denote by $f_X^* (s)$ and $f_Y^* (s)$ the Laplace-transform of $f_X$ and $f_Y$, respectively. Then
from $f_X(x)=f_Y(x)\ast f_{l,\mu}(x)$ we have $f_X^* (s)=f_Y^* (s)\cdot \left(\frac{\mu}{s+\mu}\right)^l$, and so
$$f_Y^* (s)=f_X^* (s)\left(\frac{s+\mu}{\mu}\right)^l=
f_X^* (s)\left(1+\frac{s}{\mu}\right)^l.$$

For $l=1$, the inverse transform of $f_X^* (s)\left(1+\frac{s}{\mu}\right)$ gives
$$f_Y(x)=f_X(x)+\frac{1}{\mu}\left(f_X'(x)+f_X(0)\right)=f_X(x)+\frac{1}{\mu}f_X'(x).$$

For $l>1$, induction (or the binomial theorem) gives
$$f_Y(x)=\sum_{i=0}^l \binom{l}{i}\frac{1}{\mu^i}f_X^{(i)}(x)=
-\gv\sum_{i=0}^l \binom{l}{i}\left(\frac{\G}{\mu}\right)^{i} \G e^{\G x}\ones.$$

The fact that $f_Y(x)$ is a matrix-exponential pdf is
straightforward from the above formula. { Also, it has a representation of the form ME$(\gv',\G)$ where $\gv'=-\gv\sum_{i=0}^l \binom{l}{i}\left(\frac{\G}{\mu}\right)^{i}$.}

{ We fix a value $\delta>0$ (independent from $\mu$) such that
$$f_X^{(l)}(x)>0, \quad x\in (0,\delta].$$

This is possible since $f_X^{(l)}(0)>0$ and $f_X^{(l)}$ is continuous. This in turn implies by integration that
$$f_X^{(i)}(x)\geq 0 , \quad x\in (0,\delta].$$
for every $i=l,l-1,\dots,1,0$ and thus
$$f_Y(x)=\sum_{i=0}^l \binom{l}{i}\frac{1}{\mu^i}f_X^{(i)}(x)>0\quad x\in(0,\delta).$$
This holds for any value of $\mu$.

We examine the tail of $f_Y$ next. Recall that as $x\to\infty$, $f_X(x)$ decays as $c_{\lambda_1,n_1}x^{n_1-1}e^{-\lambda_1x}$ where $c_{\lambda_1,n_1}>0$.

\begin{align}
\label{Ybinom}
f_Y(x)-f_X(x)=\sum_{i=0}^l
\binom{l}{i}\frac{1}{\mu^i}f_X^{(i)}(x)-f_X(x)= -\gv\sum_{i=1}^l
\binom{l}{i}\left(\frac{\G}{\mu}\right)^i \G{e^{\G x}}\ones.
\end{align}
Since $e^{\G x}$ decays with rate $x^{n_1-1}e^{-\lambda_1x}$,
$$-\gv \binom{l}{i}{\G}^i \G{e^{\G x}}\ones\sim c_i x^{n_1-1}e^{-\lambda_1x}$$
for each $i=1,\dots,l$ for some constants $c_i$.

Select $K_1$ such that
$$\left|\frac{-\gv \binom{l}{i}{\G}^i \G{e^{\G x}}\ones}{x^{n_1-1}e^{-\lambda_1x}}\right|\leq 2|c_i| \quad \forall x>K_1$$
for $i=1,\dots,k$, Then
$$|f_Y(x)-f_X(x)|\leq \sum_{i=1}^l \frac{2|c_i|}{\mu^i}{x^{n_1-1}e^{-\lambda_1x}}.$$
Note that $K_1$ is also independent from $\mu$.

The constant $\sum_{i=1}^l \frac{2|c_i|}{\mu^i}$ is decreasing in $\mu$. Select $\mu_0$ such that
$$\sum_{i=1}^l \frac{2|c_i|}{\mu^i}\leq \frac{1}{2}c_{\lambda_1,n_1}\quad
\forall \mu>\mu_0.$$

Select $K_2$ such that
$$f_X(x)\geq \frac{1}{2}c_{\lambda_1,n_1}{x^{n_1-1}e^{-\lambda_1x}}\quad \forall x>K_2.$$

Set $K=\max(K_1,K_2)$. At this point, $\delta$ and $K$ are fixed (independently of $\mu$), and for any $\mu>\mu_0$ it holds that
$$f_Y(x)\geq f_X(x)-|f_Y(x)-f_X(x)|\geq \frac{1}{2}c_{\lambda_1,n_1}{x^{n_1-1}e^{-\lambda_1x}}-\frac{1}{2}c_{\lambda_1,n_1}{x^{n_1-1}e^{-\lambda_1x}}=0 \quad\forall x>K.$$

We now have positivity of $f_Y$ at $[0,\delta]$ and $[K,\infty]$. For $[\delta,K]$, we use the formula \eqref{Ybinom} again, and note that

\begin{align*}&\sup_{x\in[\delta,K]}\left|\gv\sum_{i=1}^l
\binom{l}{i}\left(\frac{\G}{\mu}\right)^i \G{e^{\G x}}\ones\right|
\leq\sum_{i=1}^l\left(\frac{1}{\mu}\right)^i\sup_{x\in[\delta,K]}
\left|\gv\binom{l}{i}\G^i \G{e^{\G x}}\ones\right|,
\end{align*}
where $\sup_{x\in[\delta,K]}\left|\gv\binom{l}{i}\G^i \G{e^{\G x}}\ones\right|$ is finite for each $i=1,\dots,l$, while $\frac{1}{\mu^i}\to 0$, so there exists a $\mu_1$ such that for any $\mu>\mu_1$,
$$|f_Y(x)-f_X(x)|\leq \inf_{x\in[\delta,K]} f_X(x),$$
which is positive due to the positive density condition (specifically that $f_X$ is strictly positive on a finite interval not containing 0).

Selecting any $\mu>\max(\mu_1,\mu_2)$ finishes the lemma.

}

\subsection{Proof of Lemma \ref{lemma:kozelites}.}

We will prove part \emph{i)} first.

We will begin by showing that the supremum is obtained for $z=r$.

Series expansion gives
$$e^z-\left(1+\frac{z}{n}\right)^n=\sum_{k=0}^\infty \frac{z^k}{k!}B(n,k),$$
where
$$B(n,k)=\left\{
\begin{array}{cc}
1-\frac{n(n-1)\dots(n-k+1)}{n^k}\quad & \textrm{ if } k\leq n\\
1 & \textrm{ if } k > n
\end{array}
\right.
$$

Note the following properties of $B(n,k)$:
$$0\leq B(n,k)\leq 1 \,\,\forall\, n,\, k; \qquad \lim_{n\to\infty}B(n,k)=0 \,\,\forall\, k.$$

For every $z$ with $|z|\leq r$, we have
$$\left|e^z-\left(1+\frac{z}{n}\right)^n\right|=
\left|\sum_{k=0}^\infty \frac{z^k}{k!}B(n,k)\right|\leq
\sum_{k=0}^\infty \frac{|z|^k}{k!}B(n,k)\leq
\sum_{k=0}^\infty \frac{r^k}{k!}B(n,k)=
\left|e^r-\left(1+\frac{r}{n}\right)^n\right|.$$

Notice that the series expansion ensures $e^r-\left(1+\frac{r}n\right)^n>0$, so we only
need an upper bound on $e^r-\left(1+\frac{r}n\right)^n$. Using the straightforward inequalities
$$\ln(1+x)\geq x-\frac{x^2}{2}\,\, (x\geq 0)\quad \textrm{and}\quad
e^x\geq 1+x\,\, (x\in \mathbb{R})$$
we get that
$$e^r-\left(1+\frac{r}n\right)^n=e^r-e^{n\ln(1+r/n)}\leq
e^r-e^{r-r^2/(2n)}= e^r\left(1-e^{-r^2/(2n)}\right)\leq
e^r\left(1-\left(1-\frac{r^2}{2n}\right)\right)=e^r\
\frac{r^2}{2n}.$$

We note that this estimate is asymptotically sharp as $n\to\infty$.

For part \emph{ii)}, we use the series expansion again:
\begin{eqnarray*}
\left\|e^{\H}-\left(1+\frac{\H}{n}\right)^n\right\| &=&
\left\|\sum_{k=0}^\infty \frac{\H^k}{k!}B(n,k)\right\|\leq
\sum_{k=0}^\infty \frac{\|\H\|^k}{k!}B(n,k)\leq \\
&\leq&
\sum_{k=0}^\infty \frac{r^k}{k!}B(n,k)=e^r-\left(1+\frac{r}{n}\right)^n \leq \frac{r^2e^r}{2n},
\end{eqnarray*}
where $r= \| \H \|$.

\subsection{Proof of Lemma \ref{lemma:csillag}.}
\label{subsection:csillag}

According to the FE block composition of $\G$ it has the following block structure
\begin{equation}\label{eq:strG}
\G=\left[
 \begin{array}{c|c}\G_{11} & \G_{12}\\ \hline 0 & \G_{22} \end{array}\right]~,
\end{equation}
where
$$\G_{11}=\left[
\begin{array}{ccccc}
-\lambda_1 & \lambda_1 & 0 & \dots & 0\\
0 & -\lambda_1 & \lambda_1 & \dots & 0\\
\vdots\\
0 & & \dots & 0 & -\lambda_1
\end{array}\right], ~~~
\G_{12}=\left[\begin{array}{cccc}
0 & 0 & \dots & 0\\
\vdots & & & \vdots\\
0 & 0 & \dots & 0\\
\lambda_1 & 0 & \dots & 0
\end{array}\right]~,
$$
and $\G_{22}$ contains the rest of the FE blocks. The size of $\G_{11}$ is denoted by $n_1$
(which is the multiplicity of the dominant eigenvalue $-\lambda_1$)
and the size of $\G_{22}$ by $n_2$.
Let
$$\H=\G+\lambda_1\I,$$
and accordingly $\H_{11}=\G_{11}+\lambda_1\I, \H_{12}=\G_{12}$ and
$\H_{22}=\G_{22}+\lambda_1\I$, where $\I$ denotes the identity matrix
of appropriate size.
From $\H=\G+\lambda_1\I,$ it follows that
\[ e^{\G x} = e^{-\lambda_1 x} e^{\H x}, \]
and it is enough to investigate the dominant row of $e^{\H x}$.
In the rest of the proof, $(.)_{11},(.)_{12},(.)_{22}$
denote the corresponding matrix blocks (\emph{not} single elements).
The eigenvalues of
$\H_{22}$ have negative real parts. Their real parts are less than or equal
to $\lambda_1 - \Re(\lambda_2)$, where $-\lambda_2$ is the eigenvalue with the second largest real part.

From the series expansion of $e^{\H x}$
$$e^{\H x}=\sum_{n=0}^\infty \frac{x^n}{n!}\H^n,$$
and the block triangular structure of $\H$ we have that the upper left block is
$$(e^{\H x})_{11}=\sum_{n=0}^\infty \frac{x^n}{n!}\H_{11}^n,$$
where $(\H_{11}x)^n$ can be calculated explicitly:
$$(\H_{11}x)^n=
\left[\begin{array}{ccccccc}
0 & \dots & 0 & (\lambda_1 x)^n & 0 & \dots & 0\\
0 & \dots & 0 & 0 & (\lambda_1 x)^n & \dots & 0\\
\vdots \\
0 & & & \dots & & & (\lambda_1 x)^n\\
0 & & & & & & 0\\
\vdots\\
0 & & & \dots & & & 0\\
\end{array}\right]$$
with the nonzero elements being at positions
$(1,n+1),(2,n+2),\dots$. Specifically, $\H_{11}^n$ is $0$ for $n\geq
n_1$, so  the sum $\sum_{n=0}^\infty \frac{x^n}{n!}\H_{11}^n,$ is
actually finite, and from the above form it is clear that $(e^{\H
t})_{11}$ is upper diagonal, dominated by its first row, which of
course also dominates $(e^{\H x})_{21}=0$.

The rest of the proof is devoted to the elements of  $(e^{\H x})_{12}$ and $(e^{\H x})_{22}$.
For that, we need to examine $(e^{\H x})_{12}$.0
$$(e^{\H x})_{12}=\sum_{n=0}^\infty \frac{x^n}{n!} (\H^n)_{12}.$$
Here,
$$(\H^n)_{12}=\sum_{k=0}^{n-1}(\H_{11})^k\H_{12}(\H_{22})^{n-k-1}$$
since $\H$ is an upper block bi-diagonal matrix.
Thus
\begin{align*}
(e^{\H x})_{12}
&=\sum_{n=1}^\infty \frac{x^n}{n!}\sum_{k=0}^{n-1}(\H_{11})^k\H_{12}(\H_{22})^{n-k-1}\\
&=\sum_{k=0}^\infty(\H_{11})^k\H_{12}\sum_{n=k+1}^\infty
\frac{x^n}{n!}(\H_{22})^{n-k-1}\\
&=\sum_{k=0}^{n_1-1}(\H_{11})^k\H_{12}\sum_{n=k+1}^\infty
\frac{x^n}{n!}(\H_{22})^{n-k-1}.
\end{align*}
Again, the sum over $k$ is finite.

The inner sum can be calculated as
$$\sum_{n=k+1}^\infty \frac{1}{n!}x^{n-k-1}=x^{-k-1}\sum_{n=k+1}^\infty \frac{1}{n!}x^n=
x^{-k-1}\left(e^x-\sum_{l=0}^k \frac{x^l}{l!}\right),$$
and accordingly,
$$\sum_{n=k+1}^\infty
\frac{x^n}{n!}(\H_{22})^{n-k-1}=
(\H_{22})^{-(k+1)}
\left(e^{x\H_{22}}-\mx{I}-\H_{22}x-\dots-\frac{(\H_{22}x)^k}{k!}
\right).$$

Putting it all together, we obtain that
$$(e^{\H x})_{12}=
\sum_{k=0}^{n_1-1}(\H_{11})^k\H_{12}(\H_{22})^{-(k+1)}
\left(e^{x\H_{22}}-\mx{I}-\H_{22}x-\dots-\frac{(\H_{22}x)^k}{k!}
\right).$$

The form of $(\H_{11})^k\H_{12}$ guarantees that for each $k$
$$(\H_{11})^k\H_{12}(\H_{22})^{-(k+1)}
\left(e^{x\H_{22}}-\mx{I}-\H_{22}x-\dots-\frac{(\H_{22}x)^k}{k!}
\right)$$
has a single nonzero row, with $k=n_1-1$ corresponding to the first row being nonzero, $k=n_1-2$ to the second etc.\ Within each row, the main term is
$$-(\H_{11})^k\H_{12}(\H_{22})^{-(k+1)}
\frac{(\H_{22}x)^k}{k!}=-\frac{x^k}{k!}(\H_{11})^k\H_{12}(\H_{22})^{-1}.$$

Specifically, the main term in each element of the first row is  of
order $x^{n_1-1}$, and the order in the other rows is smaller within
the block $(e^{\H x})_{12}$.

We need to calculate $\H_{22}^{-1}$. It can be calculated either
via Cramer's (which allows for calculating the constants $C_j$
explicitly, but is left to the reader), or by using the following
identity:
$$\H_{22}^{-1}=
-\int_{\tau=0}^\infty e^{\H_{22}x}\mathrm{d} t=
-\int_{\tau=0}^\infty e^{\lambda_1 x}\cdot e^{\G_{22}x}\mathrm{d} t.$$

The integral exists because all eigenvalues of $\H_{22}$ have
negative real part. $e^{\lambda_1 x}$ is a positive function
(``weight'') and $e^{\G_{22}x}$ contains the transition
probabilities of a CTMC, so all elements of $e^{\G_{22}x}$ are
positive for all $t>0$. Thus all elements of $\H_{22}^{-1}$ are
negative, and the single nonzero row of
$-(\H_{11})^k\H_{12}(\H_{22})^{-(k+1)} \frac{(\H_{22}x)^k}{k!}$ is
strictly positive.

Finally, since the block $(\H)_{22}$ has eigenvalues with
negative real part, the elements of $(e^{\H x})_{22}$ decay
exponentially, so they are of course dominated by the first row of
$(e^{\H x})_{12}$.

One last remark:
\cite[page 771]{Mocanu99} discusses
the same statement
in a rather descriptive manner
using communicating classes.
\end{document}